\newenvironment{proof}{{\bf Proof:  }}{\hfill\rule{2mm}{2mm}\vspace*{5pt}}
\newenvironment{proofof}[1]{{\vspace*{5pt} \noindent\bf Proof of #1:  }}{\hfill\rule{2mm}{2mm}\vspace*{5pt}}
\newtheorem{claim}{Claim}[section]
\newtheorem{theorem}{Theorem}[section]
\newtheorem{lemma}[theorem]{Lemma}
\newtheorem{example}[theorem]{Example}
\newtheorem{definition}[theorem]{Definition}
\title{Budget-feasible Maximum Nash Social Welfare Allocation is Almost Envy-free}
\author{Xiaowei Wu$^1$ \hspace{30pt} Bo Li$^2$ \hspace{30pt} Jiarui Gan$^3$\\
$^1$IOTSC, University of Macau\\\
$^2$Department of Computing, The Hong Kong Polytechnic University\\
$^3$Max Planck Institute for Software Systems\\
\texttt{xiaoweiwu@um.edu.mo, comp-bo.li@polyu.edu.hk, jrgan@mpi-sws.org}}
\date{}
\begin{document}

\maketitle

\begin{abstract}
The {\em Nash social welfare} (NSW) is a well-known social welfare measurement that balances individual utilities and the overall efficiency.
In the context of fair allocation of indivisible goods, it has been shown by Caragiannis et al. (EC 2016 and TEAC 2019) that an allocation maximizing the NSW is envy-free up to one good (EF1).
In this paper, we are interested in the fairness of the NSW in a budget-feasible allocation problem, in which each item has a cost that will be incurred to the agent it is allocated to, and each agent has a budget constraint on the total cost of items she receives.
We show that a budget-feasible allocation that maximizes the NSW achieves a $1/4$-approximation of EF1 and the approximation ratio is tight.
The approximation ratio improves gracefully when the items have small costs compared with the agents' budgets;
it converges to $1/2$ when the budget-cost ratio approaches infinity.
\end{abstract}

\section{Introduction}

Fairness and efficiency are two of the primary considerations in public economics.
Yet, these two objectives are often unaligned and even conflict with each other. 
The {\em Nash social welfare} (NSW) --- proposed by Nash as a solution for bargaining
problems \cite{nash1950bargaining,kaneko1979nash} --- is a well-known social welfare function that balances individual utilities and the overall efficiency.
In more details, the NSW of a solution $\mathbf{x}$ is defined as the product of the agents' values $v_i(\mathbf{x})$ provided by this solution.
As noted in \cite{conf/sigecom/BranzeiGM17}, it can be viewed as a special case of a family of functions known as {\em generalized (power) means}:
\[
\textstyle M_p(\mathbf{x}) =  \left(\frac{1}{n}\sum_{i} v_i(\mathbf{x})^p \right)^{1/p}.
\]
When $p = 1$, $M_p(\mathbf{x})$ defines the utilitarian social welfare, i.e., the average of all agents' values; and when $p \to -\infty$, $M_p(\mathbf{x})$ defines the egalitarian social welfare, i.e., the minimum value received by an agent.
The NSW corresponds to exactly the limit of $M_p(\mathbf{x})$ when $p \to 0$, i.e., $\left(\prod_{i \in N} v_i(\mathbf{x}) \right)^{1/n}$.
Thus, it serves a middle ground between the two other cases. A Max-NSW allocation, i.e., an allocation that maximizes the NSW, naturally leads to balanced values among the agents while it also nudges the overall efficiency towards the maximum.

In the context of fair allocation of indivisible goods,
it has been shown by Caragiannis et al.~\cite{caragiannis2016unreasonable,caragiannis2019unreasonable}
that a Max-NSW allocation is {\em envy-free up to one good} (EF1); 
namely, it ensures that every agent does not envy the bundle of any other agent by more than one item. The fairness of the NSW has since attracted much attention from the fair allocation community.

\subsection{Our Results}

In this paper, we study the fairness of the NSW in a budget-feasible allocation problem.
In this problem, every item $j$ has a cost of $c_j$ that will be incurred to the agent it is allocated to, and every agent $i$ has a budget $B_i > 0$ that can be used to pay for the additive cost of the bundle $S$ of items she receives, i.e., $c(S) = \sum_{j \in S} c_j\leq B_i$.
An allocation $(X_1, \cdots, X_n)$ is a collection of disjoint subsets of items, whereby $X_i$ is the bundle of items allocated to agent $i$, yielding value $v_i(X_i)$ for this agent.
An allocation is {\em budget-feasible} if $c(X_i) \le B_i$ for all $i$, i.e., every agent can afford the cost incurred by this allocation. 
Accordingly, a Max-NSW feasible allocation is equivalent to the solution to the following problem. 
\begin{align*}
\max_{(X_1, \cdots, X_n)} &\quad \prod_{i \in N} v_i(X_i) \\
\text{subject to} &\quad (X_1, \cdots, X_n) \text{ \em is budget-feasible.}
\end{align*}
Note that since allocations are subject to budget constraints, it is not always possible to allocate all items. The unallocated items are assumed to be allocated to a {\em charity} who has unlimited budget.
Indeed, it is not hard to see that a Max-NSW allocation is Pareto-optimal (PO), i.e., we cannot hope to further improve some agent's utility without hurting the other agents.
This immediately implies that no agent has any incentive to exchange her bundle with any subset of unallocated items that fits within her budget.
An interesting question is whether the agents would wish to exchange their bundles with the other agents.

To answer this question, we adapt the classic EF (envy-free) notion to the budget-feasible setting. In our EF notion, an agent $i$ envies another agent $j$ if there is a subset of items allocated to agent $j$ that costs at most $B_i$ and gives agent $i$ a strictly higher value. 
The EF1 notion can be defined similarly: agent $i$ envies agent $j$ for more than one item if there is a subset of $X_j$ that costs at most $B_i$ and gives $i$ a strictly higher value even after removing the most valuable item from this subset. 
An allocation is said to be EF (resp., EF1) if no agent envies any other agent (resp., for more than one item).
We show that, unlike the setting without budget constraints, a Max-NSW allocation is no longer always EF1. 
However, it remains a good approximation of EF1.
Our main contributions are summarized by the following two results, where we call an allocation $\alpha$-EF1 is it achieves an $\alpha$ approximation of EF1, for $\alpha\in [0,1]$. % a  allocation.

\medskip
\noindent{\bf Main Result 1.} {\em A Max-NSW allocation is $\frac{1}{4}$-EF1 and PO.}

\medskip
\noindent
We also show that the approximation ratio we proved cannot be improved.
Specifically, we construct an instance for which a Max-NSW allocation is exactly $ 1/4$-EF1.
We observe that this instance requires some items to have very high costs compared with the agents' budgets, i.e., the budget-cost ratio is $\kappa = \min_{i,j}\frac{B_i}{c_j} = 1$.
This motivates our study of the {\em large budget case} where $\kappa$ is large.
Interestingly, we find that the approximation guarantee improves gracefully with $\kappa$ and eventually converges to ${1}/{2}$ when $\kappa$ goes to infinity.

\medskip
\noindent{\bf Main Result 2.} {\em A Max-NSW allocation is $\left( \frac{1}{2}-\frac{5}{\kappa^{1/4}} \right)$-EF1.}

\medskip
\noindent
Note that instances with large budget are also a typical setting of the bin packing problem that has received considerable interests in the literature~\cite{esa/FeldmanHKMS10,ec/DevanurJSW11,icalp/MolinaroR12,stoc/KesselheimTRV14}.
%\jiarui{I added this. Is it necessary?}
We remark that when $\kappa$ goes to infinity our model does not degenerate to the one without budget constraints, because the number of items can be infinity as well. This is why the approximation ratio does not converge to $1$ as in the setting without budget constraints.

\paragraph{Technical Novelty.}
We remark that deriving the above results requires techniques very different from those employed in previous work. The approach of Caragiannis et al.~\cite{caragiannis2016unreasonable,caragiannis2019unreasonable} for proving the EF1-ness of Max-NSW allocation is by contradiction: if a Max-NSW allocation were not EF1, then by reallocating some items, the NSW could be improved.
In our setting, such reallocations need to conform to the budget constraints.
We may need to discard some items when reallocating items to an agent.
Thus, applying the same argument requires a larger difference between the agents' values.
Indeed, this is why the Max-NSW allocation is only approximately EF1 in our setting, instead of being exactly EF1.

Our analysis for the second main result is more challenging.
We observe that if the items are divisible, then any Max-NSW allocation must be $1/2$-EF1.
To convert the analysis to the indivisible case, we show that in the divisible analysis at most a constant number of items are fractionally allocated.
Then it suffices to give a rounding scheme for a small number of fractional items.
A main challenge arises when we have to deal with fractional items with large values. We resolve this issue by using a fine-tuned classification of items into heavy and light ones and applying different analyses depending on the class the majority of the items belong to.

\subsection{Related Work}
There is a large literature on fair division and NSW.
Due to space limit, we only discuss the most related previous work.
The pioneering work of Caragiannis et al.~\cite{caragiannis2016unreasonable,caragiannis2019unreasonable} triggered 
a series of follow-up interests in the fairness of the NSW.
It is shown in \cite{biswas2018fair} that with matriod constraints a Max-NSW allocation is always EF1 if the agents have identical valuations.
More recently, Amanatidis et al.~\cite{conf/ijcai/AmanatidisBFHV20} proved that a Max-NSW allocation guarantees envy-freeness up to {\em any} item (EFX) if there are at most two possible values for the goods.
The fairness of NSW has also been studied in the setting when the goods are public \cite{conitzer2017fair,fain2018fair}.
Furthermore, the algorithm in \cite{barman2018finding} showed that there is an allocation that is simultaneously $1/1.45$-approximate Max-NSW, Pareto optimal, and $(1-\epsilon)$-approximate EF1. Similarly, the algorithm in \cite{garg2019improving} computes an allocation that is $1/2$-approximate Max-NSW, proportional up to one item, $1/(2n)$-approximate maximin share fair,
and Pareto optimal and can be found in polynomial time. 
No previous work has considered the setting with budget constraints to the best of our knowledge.

%\paragraph{Fair Division with Charity.} 
Our work is conceptually related to a line of work on fair division that involves a charity in the models.
In our budget-feasible setting, unallocated items are assumed to be donated to a charity.
The concept of charity has also been used to simplify analysis of EFX allocations as was in \cite{caragiannis2016unreasonable,caragiannis2019unreasonable}.
Caragiannis et al. also showed that there is an EFX allocation on a subset of items that is a $1/2$-approximation of Max-NSW over the original set of items \cite{caragiannis2019envy}.
Similarly, Chaudhury et al. \cite{chaudhury2020little} proved that by donating no more than $n$ items, 
there is an EFX allocation for the remaining items and no agent envies the charity.

The complexity of computing a Max-NSW allocation has been studied for various types of valuation functions \cite{ramezani2009nash,darmann2015maximizing}.
The problem is APX-hard for additive valuation functions \cite{nguyen2014computational,lee2017apx}. 
There has been some effort towards designing efficient algorithms to compute approximate Max-NSW allocations \cite{nguyen2014minimizing,cole2015approximating,cole2017convex}.
The state-of-the-art approximation ratio of NSW is $1.45$ by \cite{barman2018finding}. 

\section{Preliminaries}

In our model, a set $M$ of $m$ goods will be allocated to a set $N$ of $n$ agents.
Every item $j \in M$ has a cost $c_j\geq 0$ and every agent $i \in N$ has a budget $B_i\geq 0$.
Each agent $i$ has an additive valuation function $v_i : 2^M \rightarrow \mathbb{R}^+\cup\{0\}$ on the subsets of items.
Equivalently, we can express  the valuation function $v_i$ as a vector $v_i = (v_{i1}, \cdots, v_{im})$, where $v_{ij} \in \mathbb{R}^+\cup\{0\}$ denotes agent $i$'s value for item $j$.
Upon receiving a subset $S \subseteq M$ of items, an agent $i$ obtains an additive value $v_i(S) = \sum_{j \in S} v_{ij}$; in addition, the allocation incurs an additive cost $c(S) = \sum_{j \in S} c_j$ to agent $i$, which needs to be paid using her budget. 
We assume that the values for the items and the cost of the allocation are not mutually convertible. For example, the cost may represent the size of the items and the budget represents the capacity of her knapsack, which cannot be converted from the items' values.

An allocation $\mathbf{X} = (X_0,X_1, \cdots, X_n)$ is a partition of the item set $M$ into $n+1$ bundles, where $X_i \subseteq M$ is the bundle of items allocated to agent $i$ for each $i\in N$ and $X_0$ is the set of unallocated item (recall that due to the budget constraints, we cannot guarantee $\bigcup_{i=1}^n X_i = M$).
The unallocated items can be thought of as a donation to a special agent $0$, i.e., a {\em charity}, who has an unbounded budget and value $0$ for every item (in the sense that she does not envy any other agents no matter what they get).
An allocation $\mathbf{X}$ is \emph{budget-feasible} if $c(X_i) \leq B_i$, for all $i\in N$.

In the remainder of this paper, we require every allocation to be feasible.
When we write ``an allocation'' we mean ``a feasible allocation'' unless otherwise stated.
For convenience, for a set $X$ and an element $j$, we will write $X\cup \{j\}$ as $X+j$ and $X\setminus \{j\}$ as $X-j$ throughout the paper.

\begin{definition}[Max-NSW]
An allocation $\mathbf{X^*}$ is a Max-NSW allocation if it maximizes the Nash social welfare, i.e., $\prod_{i\in N} v_i(X_i^*) \ge \prod_{i\in N} v_i(X_i)$ for any allocation $\mathbf{X}$.
\end{definition}

In the budget-feasible setting, an agent cannot hope to take a bundle that costs more than her budget. We adapt the EF notion accordingly and define it as follows.

\begin{definition}[EF]
For any $\alpha\in [0,1]$, an allocation $\mathbf{X}$ is {\em $\alpha$-approximate envy-free ($\alpha$-EF)} if for any two agents $i, j \in N$ and any $S \subseteq X_j$ with $c(S) \le B_i$, it holds that
	\begin{equation*}
	v_i(X_i) \ge \alpha\cdot v_i(S).
	\end{equation*}
When $\alpha = 1$, $\mathbf{X}$ is also said to be EF. 
\end{definition}
This definition is consistent with the standard EF notion when every agent's budget is $c(M)$.
Since the items are indivisible, (approximate) EF is hard to achieve.
One of the most widely studied relaxed notion is envy-free up to one item.

\begin{definition}[EF1] \label{def:ef1}
	For any $\alpha\in [0,1]$, an allocation $\mathbf{X}$ is {\em $\alpha$-approximate envy-free up to one item ($\alpha$-EF1)} if for any two agents $i, j \in N$ and any $S \subseteq X_j$ with $c(S) \le B_i$, there exists $j\in S$ such that
	\begin{equation*}
	v_i(X_i) \ge  \alpha \cdot v_i(S - j).
	\end{equation*}
When $\alpha = 1$, $\mathbf{X}$ is also said to be EF1. 
\end{definition}

Trivially, an allocation $\mathbf{X} = (M, \emptyset, \dots, \emptyset)$ (i.e., allocating all items to the charity) is already EF and EF1.
Apparently, such an allocation is not efficient for the agents by any reasonable measurement of efficiency.
Therefore, we consider Pareto optimality as a criterion for efficiency.

\begin{definition}[PO] An allocation $\mathbf{X}$ is {\em Pareto optimal (PO)} if there exists no allocation $\mathbf{X}'$ such that $v_i(X'_i) \ge v_i(X_i)$ for all $i \in N$ and $v_j(X'_j) > v_j(X_j)$ for some $j \in N$.
\end{definition}

Note that in a PO allocation $\mathbf{X}$ no agent $i$ can improve her value by exchanging items with the charity (subject to budget constraint), i.e., for any $S \subseteq X_i \cup X_0$ such that $c(S) \le B_i$, it holds that $v_i(X_i) \ge v_i(S)$.

\section{Max-NSW Allocation is $\frac{1}{4}$-EF1 and PO} \label{sec:1/4-approx}

In this section, we present our first main result, which is stated in the theorem below. 

\begin{theorem}\label{thm:Nash-sw-EF1}
	A Max-NSW allocation $\mathbf{X^*}$ is $\frac{1}{4}$-EF1 and PO.
\end{theorem}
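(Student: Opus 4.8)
The plan is to follow the proof-by-contradiction template of Caragiannis et al., but with the bookkeeping needed to respect budget constraints. Suppose $\mathbf{X^*}$ is a Max-NSW allocation that is not $\frac14$-EF1. Then there exist agents $i,j$ and a set $S\subseteq X_j^*$ with $c(S)\le B_i$ such that for every $g\in S$ we have $v_i(X_i^*) < \frac14\, v_i(S-g)$. In particular, taking $g$ to be the item in $S$ of maximum $v_i$-value, we get $v_i(X_i^*) < \frac14\, v_i(S)$ and also $v_i(S) \le 2 v_i(S-g)$ only if no single item dominates; more usefully, we have $v_i(S-g) > 4 v_i(X_i^*)$ for the max-value item $g$, so $v_i(S) > 4 v_i(X_i^*)$. (First I should also dispense with the degenerate case $v_i(X_i^*)=0$, where Pareto optimality already forces a contradiction unless every affordable subset of $X_j^*$ has value $0$ for $i$.)

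The core step is to build a ``swap'' that strictly increases the NSW. The natural move is to take some subset $T\subseteq S$, give $T$ to agent $i$, and give $X_i^*$ (or part of it) to agent $j$ or the charity; the budget feasibility of $T\subseteq S$ for $i$ is automatic, but we must keep $j$'s bundle affordable — simply removing $T$ from $X_j^*$ only decreases its cost, so $X_j^*\setminus T$ stays feasible for $j$. The subtlety is that to keep agent $i$'s gain large we would like to hand her all of $S$, but to preserve the product we need to control the loss in $v_j$. I would split $S$ greedily: order the items of $S$ by decreasing $v_i$-value and let agent $i$ take the shortest prefix $T$ whose $v_i$-value already exceeds, say, $2v_i(X_i^*)$ (this is possible since $v_i(S) > 4v_i(X_i^*)$, and because we stop as soon as the threshold is crossed, $v_i(T) < 2v_i(X_i^*) + v_{ig^*} $ where $g^*$ is the largest item — we may need the assumption that no single item alone beats $4v_i(X_i^*)$, which is where the ``up to one item'' slack and the factor $4$ versus $2$ come in). Agent $i$'s factor in the product thus at least doubles. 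For agent $j$, we give $X_j^*\setminus T$ (which is feasible), losing at most a factor of $2$ in $v_j$ provided $v_j(T) \le \frac12 v_j(X_j^*)$; if instead $v_j(T) > \frac12 v_j(X_j^*)$, then $T$ is a very valuable chunk of $j$'s bundle, and I would instead argue that giving $i$ the complement $S\setminus T$ (still affordable, still high value for $i$ by the ordering) works. A clean case split on ``is $v_j(T)$ large or small'' should cover both; combining the $\ge 2$ gain for $i$ against the $\le 2$ loss for $j$ yields a product that is at least as large, and a careful accounting of the strict inequalities (from $v_i(X_i^*) < \frac14 v_i(S-g)$) makes it strictly larger, contradicting Max-NSW.

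The main obstacle I anticipate is handling a single very valuable item in $S$ — one item $g$ with $v_{ig}$ comparable to or larger than $v_i(S-g)$. The EF1 relaxation lets us delete the top item, so the hypothesis really says $v_i(S-g) > 4 v_i(X_i^*)$ after deleting the best item, and then the prefix-selection above should operate on $S-g$ rather than $S$; this guarantees the prefix $T\subseteq S-g$ has $v_i(T)$ between $2v_i(X_i^*)$ and roughly $3v_i(X_i^*)$ (bounded because each remaining item has value at most $v_{ig}$, which is at most... here one needs to be careful, since $v_{ig}$ could itself be huge). The resolution is that if $v_{ig} > 2 v_i(X_i^*)$ we can just take $T=\{g\}$ alone — one item suffices to more than double $i$'s value — and then feasibility is immediate since $c(g)\le c(S)\le B_i$, and for $j$ we only lose the single item $g$, so $v_j(X_j^*\setminus\{g\}) \ge \frac12 v_j(X_j^*)$ unless $v_{jg} > \frac12 v_j(X_j^*)$, in which case we route $g$ to the charity and $X_i^*$ to $j$? — that does not obviously help. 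I expect the honest fix is a two-item prefix and the constant $4$ is exactly tuned so that in every branch (single dominant item; no dominant item; dominant item for $j$ but not $i$; etc.) the product is preserved; verifying that the constant $4$ rather than $3$ is what makes all branches close simultaneously is the delicate part, and I would organize the write-up around which agent's bundle ``absorbs'' the discarded items.
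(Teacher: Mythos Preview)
Your proposal has a genuine gap, and you already put your finger on it: the case where a single item $g\in S$ is simultaneously very valuable to both agents (say $v_{ig}>2v_i(X_i^*)$ and $v_{jg}>\tfrac12 v_j(X_j^*)$) is not handled, and your attempted routes (``route $g$ to the charity and $X_i^*$ to $j$'', or pass to $S\setminus T$) do not close. The underlying reason is structural: you sort $S$ by agent~$i$'s valuation, which gives you excellent control over $v_i(T)$ but tells you nothing about $v_j(T)$ or $v_j(S\setminus T)$. A prefix/suffix split in $i$'s ordering has no reason to be balanced in $j$'s valuation, so the ``either $T$ or $S\setminus T$ is cheap for $j$'' dichotomy you are hoping for simply does not hold (both could be expensive for $j$ in the sense you need, or the one that is cheap for $j$ could be worthless to $i$).

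The fix the paper uses is to flip the roles: sort the items of $T$ by agent~$j$'s valuation, set aside the top item $t$ (the one with largest $v_j$, not largest $v_i$), and split $T-t$ into odd-indexed and even-indexed parts $T_1,T_2$. Because of the sorting, $\min\{v_j(T_1),v_j(T_2)\}+v_{jt}\ge \max\{v_j(T_1),v_j(T_2)\}$, hence both $v_j(T_1)$ and $v_j(T_2)$ are at most $\tfrac12 v_j(T)\le\tfrac12 v_j(X_j^*)$. Now pick $T_l=\arg\max_{\ell\in\{1,2\}} v_i(T_\ell)$; by averaging $v_i(T_l)\ge\tfrac12 v_i(T-t)>2v_i(X_i^*)$, where the last step uses the $\frac14$-EF1 failure with the removed item being $t$. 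Give $T_l$ to $i$ (feasible since $c(T_l)\le c(T)\le B_i$), send $X_i^*$ to the charity, and leave $X_j^*\setminus T_l$ with $j$; the product strictly increases. The point is that the odd--even split in $j$'s ordering makes \emph{both} halves cheap for $j$, so you are free to choose whichever half $i$ prefers---no case split on $v_j(T)$ is needed at all.
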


Before presenting the proof, we remark that the ${1}/{4}$ approximation ratio is tight, even when agents have identical valuation and budget.
We present an instance for which a Max-NSW allocation is not $c$-EF1, for any constant $c > {1}/{4}$.

\begin{example}
\label{exp:max-nsw-not-cef1}
Let $\epsilon>0$ be arbitrarily close to $0$.
Consider the instance with two agents who have the identical valuation $v$ and budget $B = 1$.
Let there be $1+\frac{1}{\epsilon}$ items, where $v_1=1+\epsilon$, $c_1=1$, $v_j = 4\epsilon$, and $c_j = \epsilon$ for all $j\in\{2,3,\ldots,1+\frac{1}{\epsilon}\}$.
It can be verified that the allocation $\mathbf{X}^*$ with $X_1^* = \{1\}$ and $X_2^* = \{2,\ldots,1+\frac{1}{\epsilon}\}$ maximizes the NSW with $v(X_1^*)\cdot v(X_2^*) = 4(1+\epsilon)$.
However, $\mathbf{X}^*$ is not $c$-EF1, for any $c > \frac{1}{4}$, because $v(X_1^*) = \frac{1-\epsilon^2}{4}\cdot v(X_2^* - j)$ for all $ j\in X_2^*$.
\begin{table}[htb]
	\begin{center}
		\begin{tabular}{ccccc}
			\toprule
			$j$\quad~	&	1			&	2		&	$\ldots$	&	$1+\frac{1}{\epsilon}$ \\
			\midrule
			$v_j$\quad~	&	$1+\epsilon$	&$4\epsilon$	&	$\ldots$	&	$4\epsilon$\\
			%\hline
			$c_j$\quad~	&	1			&$\epsilon$	&	$\ldots$	&	$\epsilon$\\
			\bottomrule
		\end{tabular}
	\end{center}
\end{table}	
\end{example}

Next we proceed to the proof of Theorem~\ref{thm:Nash-sw-EF1}.

\begin{proofof}{Theorem~\ref{thm:Nash-sw-EF1}}
	It is trivial to see that $\mathbf{X^*}$ is PO because if there is another allocation that increases one agent's value without decreasing any other agent's value, it must also increase the NSW, which contradicts the maximum of NSW under $\mathbf{X^*}$.
	In what follows, we prove that $\mathbf{X^*}$  is $\frac{1}{4}$-EF1 by contradiction.
	Suppose that a Max-NSW allocation $\mathbf{X^*}$ is not $ \frac{1}{4}$-EF1, we show that there exists another allocation $\mathbf{X'}$ with a strictly larger NSW, which is a contradiction with $\mathbf{X^*}$ being a Max-NSW allocation.
	
	By assumption, there exist two agents, say $1$ and $2$, such that agent $1$ finds a set $T \subseteq X_2$ such that $c(T) \le B_1$ and 
	\begin{align}\label{eq:nw:ef1}
	v_1(T-j) > 4\cdot v_1(X_1), \qquad \text{ for all } j\in T.
	\end{align}
	For ease of description, we rename the items in $T$ as $\{1,2, \cdots, t\}$ such that $v_{2j} \le v_{2,j+1}$ for any $1 \le j <t$. Let $T_1 = \{j \in T-t : j \text{ is odd}\}$ and $T_2 = \{j \in T -  t : j \text{ is even}\}$. 
	Note that we have either 
	\begin{equation*}
	v_2(T_1) \le v_2(T_2) \le v_2(T_1+ t)\quad
	\text{ or } \quad
	v_2(T_2) \le v_2(T_1) \le v_2(T_2+ t).
	\end{equation*}
	
	In other words, under the valuation of agent $2$, if we assign item $t$ to the bundle with smaller value, then the resulting bundle has value at least that of the other bundle.		
	
	Now we can construct a new allocation $\mathbf{X}'$, where 
	\begin{itemize}
	\item
	$X'_i = X_i$ for all $i \geq 3$, 
	\item
	$X'_1 = T_l$, where $l \in \arg\max_{\ell \in \{1,2\}} v_1(T_\ell)$, 
	\item 
	$X'_2 = X_2 \setminus T_l$, and 
	\item
	$X'_0 = X_0 \cup X_1$. 
	\end{itemize}
	In other words, we first remove all the items originally allocated to agent $1$.	
	Then we pick the bundle agent $1$ prefers in $\{T_1, T_2\}$, and move items in this bundle from $X_2$ to $X_1$.
	Note that under the valuation of agent $2$, the items agent $2$ loses have total value at most $\frac{1}{2}\cdot v_2(T)$.
	In addition, $\mathbf{X}'$ is a feasible allocation as $c(T_l) \le c(T) \le B_1$ and $c(X_2 \setminus T_l) \le c(X_2) \le B_2$.
	By Equation \eqref{eq:nw:ef1}, we then have
	\begin{equation*}
	v_1(X'_1) \geq \frac{1}{2}\cdot v_1(T-j) > 2\cdot v_1(X_1).
	\end{equation*}
	Moreover,
	\begin{equation*}
	v_2(X'_2) = v_2(X_2) - v_2(T_l) \ge \frac{1}{2}\cdot v_2(X_2).
	\end{equation*}
	Since $v_i(X'_i) = v_i(X_i)$ for all $i \geq 3$, it follows that 
	\begin{equation*}
	\prod_{i \in N} v_i(X'_i) > \prod_{i \in N} v_i(X_i),
	\end{equation*}
	which contradicts the fact that $\mathbf{X^*}$ is Max-NSW. 
\end{proofof}

\paragraph{Remark.} While we assume that the valuation functions are additive, our result holds with {\em sub-additive} valuation functions as well. We provide the details in the Appendix~\ref{appendix:sub-additive}.

\section{Improved Ratios with Large Budgets}

We have shown that in general, the $1/4$ approximation ratio can not be improved.
However, one observation is that in the tight instance we presented (Example~\ref{exp:max-nsw-not-cef1}), the budgets are such that $B_1 = B_2 = 1 = \max_{j\in M}c_j$.
In other words, an agent's budget may be exhausted by allocating only one item to her.
Interestingly, we find that when this is not the case, the approximation ratio can be improved.

\subsection{Warm-up Analysis}

As a warm up, we show that if it takes at least two item to exhaust the budget of any agent, then the approximation ratio can be slightly improved to $\frac{3}{11}\approx 0.273$.

\begin{lemma}
	Suppose $B_i \geq 2\cdot c_j$ for all $i\in N$ and $j\in M$. Then a Max-NSW allocation $\mathbf{X^*}$ is $\frac{3}{11}$-EF1.
\end{lemma}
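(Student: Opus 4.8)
The plan is to mimic the proof of Theorem~\ref{thm:Nash-sw-EF1}, again arguing by contradiction: suppose a Max-NSW allocation $\mathbf{X^*}$ is not $\frac{3}{11}$-EF1, so there are agents $1,2$ and a set $T\subseteq X_2$ with $c(T)\le B_1$ and $v_1(T-j) > \frac{11}{3}\cdot v_1(X_1)$ for all $j\in T$. As before I would reallocate some subset $T'\subseteq T$ to agent $1$ (discarding $X_1$ to the charity) and leave $X_2\setminus T'$ for agent $2$, then show $v_1(T')\cdot v_2(X_2\setminus T') > v_1(X_1)\cdot v_2(X_2)$, contradicting optimality. In the previous proof the split into odd/even-indexed items (sorted by $v_{2j}$) guaranteed the better half, from agent $1$'s perspective, has value at least $\tfrac12 v_1(T-j)$ while costing agent $2$ at most $\tfrac12 v_2(T)\le \tfrac12 v_2(X_2)$, giving the product bound $2\cdot\tfrac12=1$, not quite a contradiction. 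The assumption $B_i\ge 2c_j$ is exactly what lets us do better: because no single item exhausts the budget, we have more freedom in how we carve $T$.

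The key idea I expect to use is that, since every item costs at most $B_1/2$, agent $1$ can afford to take, in addition to her ``better half'' $T_l$, at least one more item — in particular the globally removed item $t$ (the one with largest $v_{2j}$), or more cleverly the item in $T_l$ that agent $1$ values most can be kept even if we have to shed cost elsewhere. Concretely I would refine the construction: among the two halves $T_1,T_2$ of $T-t$, give agent $1$ the half $T_l$ she prefers, and then additionally hand her item $t$ if this stays within $B_1$; the budget assumption and a careful accounting of which items are ``large'' in cost should let us always include $t$ or an equivalently valuable single item. This boosts $v_1(X_1')$ from $\tfrac12 v_1(T-t)$ toward something like $\tfrac12 v_1(T-t) + v_{1t} \ge$ a larger fraction of $\max_j v_1(T-j)$, while the cost shed by agent $2$ is still bounded. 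Balancing the two resulting factors — agent $1$'s gain against agent $2$'s loss — is where the constants $3$ and $11$ come from: one solves for the threshold $\alpha$ at which $(\text{gain factor})\times(\text{loss factor}) > 1$ becomes forced, and $\frac{3}{11}$ is the value that makes the worst case tight under the $B_i\ge 2c_j$ constraint.

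The main obstacle will be handling the case analysis on item costs cleanly. The difficulty is that the ``extra'' item agent $1$ wants to keep may itself be costly, forcing her to drop items from $T_l$, which in turn could lower $v_1(X_1')$; conversely the item $t$ removed in the odd/even pairing argument might be exactly the one agent $1$ values most, and losing it hurts. So I anticipate splitting into subcases according to whether $t$ (or the most valuable item for agent $1$ in $T$) has small or large cost relative to $B_1$, and in each subcase choosing a different reallocation — sometimes the parity split plus $t$, sometimes a greedy ``take most-valuable-for-$1$ items until the budget is nearly full'' strategy. Each subcase yields an inequality of the form $v_1(X_1')\ge \beta\cdot v_1(X_1)$ and $v_2(X_2')\ge \gamma\cdot v_2(X_2)$ with $\beta\gamma>1$; the bookkeeping to verify $\frac{3}{11}\cdot\frac{11}{3}$-type bounds hold in every branch, using only $c(T)\le B_1$, $c_j\le B_1/2$, and the sorted order of $v_{2j}$, is the routine-but-delicate part. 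I would also double-check that feasibility for agent $2$ is never an issue, since $X_2'\subseteq X_2$ always, and that the argument never touches agents $i\ge 3$, whose bundles are unchanged.
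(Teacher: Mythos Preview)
Your high-level setup is right, but the proposal has a genuine gap: you commit to discarding all of $X_1$ and giving agent~$1$ only a subset $T'\subseteq T$. That cannot yield $\tfrac{3}{11}$. Take $T=X_2$ consisting of $13$ identical items, each with $v_{1j}=v_{2j}=1$ and $c_j=B_1/13$, and let $v_1(X_1)=\tfrac{36}{11}-\delta$. Then $v_1(T-j)=12>\tfrac{11}{3}v_1(X_1)$ for every $j$, so the hypothesis of the contradiction is met. But for any $S\subseteq T$ you hand to agent~$1$, the product $v_1(S)\cdot v_2(X_2\setminus S)=|S|(13-|S|)\le 42$, while $v_1(X_1)\cdot v_2(X_2)\approx \tfrac{36}{11}\cdot 13\approx 42.5$. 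No choice of $T'$ beats the current NSW, no matter how you carve $T$. Your ``add item $t$'' idea doesn't help here (all items are identical), and more fundamentally it can hurt agent~$2$ badly since $t$ is by construction the item she values most. Also note that you never actually need the budget assumption to fit $T_l\cup\{t\}$: that set is contained in $T$, which already costs at most $B_1$.

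The idea you are missing is that the extra gain for agent~$1$ should come from \emph{keeping part of $X_1$}, not from taking more of $T$. The paper's argument runs as follows. Split $T$ into $T_1,T_2,\{t\}$ as in Theorem~\ref{thm:Nash-sw-EF1}. If some $T_l$ already has $v_1(T_l)>2v_1(X_1)$, you are done exactly as before. Otherwise both halves satisfy $v_1(T_l)>\tfrac{5}{3}v_1(X_1)$ (their sum exceeds $\tfrac{11}{3}v_1(X_1)$ and neither exceeds $2v_1(X_1)$). Now pick the \emph{cheaper} half, which costs at most $\tfrac12 B_1$, leaving room in agent~$1$'s budget. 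This is where $B_i\ge 2c_j$ is used: it guarantees one can extract $\hat{X}_1\subseteq X_1$ with $c(\hat{X}_1)\le\tfrac12 B_1$ and $v_1(\hat{X}_1)\ge\tfrac13 v_1(X_1)$ (greedily fill until cost exceeds $\tfrac12 B_1$; the last item, the filled prefix minus it, and the remainder each cost at most $\tfrac12 B_1$, so one of the three has value at least $\tfrac13 v_1(X_1)$). Setting $X_1'=T_l\cup\hat X_1$ gives $v_1(X_1')>\tfrac53 v_1(X_1)+\tfrac13 v_1(X_1)=2v_1(X_1)$, while agent~$2$ still keeps at least half, and the contradiction follows. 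The constant $\tfrac{3}{11}$ is exactly what makes $\tfrac{11}{3}-2=\tfrac{5}{3}$ line up with the $\tfrac13$ you can salvage from $X_1$.
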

\begin{proof}
	Suppose for the sake of contradiction that a Max-NSW allocation $\mathbf{X^*}$ is not $\frac{3}{11}$-EF1.
	Without loss of generality, let agents $1$ and $2$ be the two agents that block $\mathbf{X}^*$ from being $\frac{3}{11}$-EF1.
	In other words, there exists $T\subseteq X_2$ with $c(T)\le B_1$ such that for any $j \in T$, we have
	\begin{equation*}
	v_1(T - j) > \frac{11}{3}\cdot v_1(X_1).
	\end{equation*}
	
	We use a similar approach to the proof of Theorem~\ref{thm:Nash-sw-EF1} and partition $T$ into three subsets $T_1$, $T_2$, and $\{t\}$ such that 
	\begin{equation*}
	v_2(T_1) \le v_2(T_2) \le v_2(T_1\cup\{t\}).
	\end{equation*}
	
	\paragraph{Case 1.} If there exists $l\in\{1,2\}$ such that $v_1(T_l) > 2\cdot v_1(X_1)$, we construct a new allocation $\mathbf{X}'$ with $X'_i = X_i$ for all $i\geq 3$, $X'_1 = T_l$, $X'_2 = X_2 \setminus T_l$, and $X'_0 = X_0 \cup X_1$.  By this construction, we have that $v_1(X'_1) > 2\cdot v_1(X_1)$ and $v_2(X'_2) \le \frac{1}{2}\cdot v_2(X_2)$; thus,
	\begin{equation*}
	\prod_{i \in N} v_i(X'_i) > \prod_{i \in N} v_i(X_i),
	\end{equation*}
	which contradicts the fact that $\mathbf{X^*}$ is a Max-NSW allocation.
	
	\paragraph{Case 2.} Otherwise, we have $\max\{ v_1(T_1),v_1(T_2) \} \leq 2\cdot v_1(X_1)$. Since 
	\begin{equation*}
	v_1(T_1) + v_1(T_2) = v_1(T_1 \cup T_2) > \frac{11}{3}\cdot v_1(X_1),
	\end{equation*}
	we have
	\begin{equation*}
	\min\{ v_1(T_1),v_1(T_2) \} > \frac{5}{3}\cdot v_1(X_1).
	\end{equation*}
	Pick an arbitrary $l \in \arg\min_{\ell \in \{1,2\}} c(T_\ell)$.
	Since $c(T_1) + c(T_2) \le B_1$, we have $c(T_l) \le \frac{1}{2}\cdot B_1$.
		
	We construct a new allocation by allocating all the items in $T_l$ to agent $1$, along with a sufficiently valuable portion of items in $X_1$. 
	To ensure that the budget feasibility, we need the following result.
	
	\begin{claim}\label{claim:keep-1/3}
		There exists $\hat{X}_1\subseteq X_1$ such that $c(\hat{X}_1) \le \frac{1}{2}\cdot B_1$ and $v_1(\hat{X}_1) \ge \frac{1}{3}\cdot v_1(X_1)$. 
	\end{claim}
	\begin{proof}
		%We define $\hat{X}_1$ as follows.
		If $c(X_1) \le \frac{1}{2}\cdot B_1$, we can let $\hat{X}_1 = X_1$. Note that under this case, $v_1(\hat{X}_1) = v_1(X_1)$.
		
		Now suppose $c(X_1) > \frac{1}{2}\cdot B_1$.
		%Without loss of generality, suppose $X_1 = \{1,2,\ldots,|X_1|\}$.
		We create a set $T$ and let it be an empty set initially.
		Then we add into $T$ items in $X_1$ one at a time until $c(T) > \frac{1}{2}\cdot B_1$. Let $j$ be the last item added into $T$. We have
		\begin{equation*}
		c(T) > \frac{1}{2}\cdot B_1 \text{ and } c(T-j)\leq \frac{1}{2}\cdot B_1.
		\end{equation*}
		Consider the partition of $X_1$ into three sets: $T-j$, $\{j\}$ and $X_1\setminus T$.
		Note that all three sets have cost at most $\frac{1}{2}\cdot B_1$.
		Let $\hat{X}_1$ be the set with maximum value (under $v_1$) among $T-j$, $\{j\}$ and $X_1\setminus T$.
		Obviously, we have $v_1(\hat{X}_1) \geq \frac{1}{3}\cdot v_1(X_1)$.
	\end{proof}
	
	By the above claim, we can now construct a new allocation $\mathbf{X}'$ with $X'_i = X_i$ for all $i\geq 3$, $X'_1 = T_l \cup \hat{X}_1$, $X'_2 = X_2 \setminus T_l$, and $X'_0 = X_0 \cup (X_1\setminus \hat{X}_1)$. 
	This allocation is feasible since
	\begin{equation*}
	c(X'_1) = c(T_l) + c(\hat{X}_1) \leq \frac{B_1}{2} + \frac{B_1}{2} = B_1.
	\end{equation*}	
	By construction, we have that 
	\begin{align*}
	v_1(X'_1) = &v_1(T_l)+v_1(\hat{X}_1) 
	>  \frac{5}{3}\cdot v_1(X_1)+ \frac{1}{3}\cdot v_1(X_1) = 2\cdot v_1(X_1),
	\end{align*}
	and $v_2(X'_2) \ge \frac{1}{2}\cdot v_2(X_2)$. Therefore
	\begin{equation*}
	\prod_{i \in N} v_i(X'_i) > \prod_{i \in N} v_i(X_i),
	\end{equation*}
	which is a contradiction.
\end{proof}

As we can see from the analysis, the improvement in the approximation ratio comes mainly from the fact that when reallocating items in $T_l$ from $X_2$ to $X_1$, we are able to keep a constant fraction of the items in $X_1$ (as in Claim~\ref{claim:keep-1/3}).

It is natural to expect that if we are able to obtain an even finer dissection of $X_1$, we should be able to improve the approximation ratio even further.
Indeed, this is the case.
We show that when all items are very small compared with the budgets, the approximation ratio approaches $1/2$.
While the idea is clear, as we will show in the following section, to accomplish the analysis is a highly non-trivial task.

\subsection{Large Budget Case}

Let $\kappa = \min_{i\in N,j\in M} \frac{B_i}{c_j}$.
Without loss of generality, assume that $\kappa$ is an integer; otherwise, we round $\kappa$ down to the nearest integer.  
In other words, it takes at least $\kappa$ items to exhaust the budget of any agent.

\begin{theorem}\label{thm:large-capacity}
	The Max-NSW allocation $\mathbf{X^*}$ is $\left( \frac{1}{2}-\frac{5}{\kappa^{1/4}} \right)$-EF1, where $\kappa = \min_{i\in N,j\in M}\frac{B_i}{c_j}$.
\end{theorem}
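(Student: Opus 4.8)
The plan is to argue by contradiction in the spirit of the proofs of Theorem~\ref{thm:Nash-sw-EF1} and the warm-up lemma, but to replace the coarse ``split $T$ in half'' step by a near-optimal \emph{fractional} exchange that is afterwards rounded. Write $\alpha = \frac{1}{2}-\frac{5}{\kappa^{1/4}}$ and suppose $\mathbf{X^*}$ is not $\alpha$-EF1. Then there are two agents, say $1$ and $2$, and a set $T\subseteq X_2^*$ with $c(T)\le B_1$ such that $v_1(T-j) > \frac{1}{\alpha}\cdot v_1(X_1^*)$ for every $j\in T$. Exactly as before, it suffices to construct a budget-feasible bundle $X_1'\subseteq X_1^*\cup T$ with $c(X_1')\le B_1$ satisfying $v_1(X_1') > 2\,v_1(X_1^*)$ and $v_2\big(X_2^*\setminus X_1'\big)\ge \frac{1}{2}\,v_2(X_2^*)$: leaving all other bundles unchanged, giving $X_1'$ to agent~$1$, $X_2^*\setminus X_1'$ to agent~$2$, and the rest of $X_1^*$ to the charity then strictly increases the NSW, a contradiction. (Configurations in which the product has a zero factor are treated separately, as in~\cite{caragiannis2019unreasonable}.)

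\emph{Divisible analysis.} I would first show that if the goods were divisible the construction already succeeds with $\alpha$ replaced by $\frac12$. Order the items of $T$ by decreasing exchange ratio $v_{1j}/v_{2j}$ and let the transferred part $S\subseteq T$ be the fractional prefix with $v_2(S)=\rho\,v_2(X_2^*)$ for a parameter $\rho\le\frac12$ (taking $S=T$ when $v_2(T)\le\rho\,v_2(X_2^*)$, an easy case). Since the prefix consists of the items agent~$2$ values least per unit of agent-$1$ value, $v_1(S)\ge \rho\, v_1(T) > 2\rho\,v_1(X_1^*)$, using $v_2(X_2^*)\ge v_2(T)$ and the envy hypothesis. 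I would then fill the residual budget $B_1-c(S)$ with a $v_1/c$-densest fractional subset $\hat{X}_1\subseteq X_1^*$; a case analysis on whether $\hat{X}_1=X_1^*$ fits, together with an optimization of $\rho\in(0,\frac12)$, gives $v_1(X_1')=v_1(S)+v_1(\hat{X}_1) > 2\,v_1(X_1^*)$, while agent~$2$ keeps a $(1-\rho)$-fraction of $v_2(X_2^*)$. The point is that this construction allocates only a constant number of items fractionally: the split item of the prefix $S$, the split item of $\hat{X}_1$, and at most one more item where the optimal $\rho$ falls.

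\emph{Rounding.} It then remains to round these $O(1)$ fractional items while losing only an $O(\kappa^{-1/4})$-fraction in each of the two inequalities; this is exactly where the slack $\frac{5}{\kappa^{1/4}}$ built into $\alpha$ is spent. Fix a threshold $\tau=\Theta(\kappa^{-1/4})$ and call $j\in T$ \emph{heavy} if $v_{2j}\ge\tau\,v_2(X_2^*)$ and \emph{light} otherwise; there are at most $1/\tau=O(\kappa^{1/4})$ heavy items. Two tools are used. (i) Because every item costs at most $B_1/\kappa$, a prefix of light items approximates any cost target to within $B_1/\kappa$ and its rounding perturbs $v_2$ by at most one light item, i.e.\ by at most $\tau\,v_2(X_2^*)$; and, refining Claim~\ref{claim:keep-1/3}, $X_1^*$ has a subset of cost at most $(1-\delta)c(X_1^*)$ and value at least $(1-\delta-1/\kappa)\,v_1(X_1^*)$. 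Hence, when the fractional items are all light, rounding moves each side by only $O(\tau+1/\kappa)$. (ii) Otherwise heavy items carry the bulk of the relevant part of $T$; then I would split no heavy item, instead pre-committing each of the $O(1/\tau)$ heavy items wholly to agent~$1$ or to agent~$2$ and, as in the proof of Theorem~\ref{thm:Nash-sw-EF1}, partitioning the heavy part of $T$ into two groups of nearly equal $v_2$-value (giving agent~$1$ the group she values more), using the many light items both to top up agent~$1$'s value past $2\,v_1(X_1^*)$ and to fine-tune agent~$2$'s loss down to $\frac12 v_2(X_2^*)$; a single dominating heavy item, if it exists, is handled directly by applying the hypothesis $v_1(T-j)>\frac{1}{\alpha}v_1(X_1^*)$ with $j$ that item. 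Choosing $\tau=\Theta(\kappa^{-1/4})$ balances the $O(\tau)$ loss incurred by rounding one light item against the $O(1/\tau)$ heavy items, and yields $v_1(X_1')>2\,v_1(X_1^*)$ and $v_2(X_2^*\setminus X_1')\ge\frac12 v_2(X_2^*)$, the desired contradiction.

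\emph{Main obstacle.} The crux is case (ii): a fractional item of large value cannot be rounded without breaking one of the two target inequalities, which forces the heavy/light dichotomy and a genuinely combinatorial balancing over the few heavy items that must simultaneously respect agent~$1$'s value lower bound, agent~$2$'s value lower bound, and the budget constraint. Making all rounding errors chargeable to a single slack term --- in particular, balancing ``the number $1/\tau$ of heavy items'' against ``the value $\tau$ of one rounded light item'' --- is precisely what produces the exponent $1/4$ (rather than, say, $1/2$) in $\frac{1}{2}-\frac{5}{\kappa^{1/4}}$.
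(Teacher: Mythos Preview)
The proposal has a genuine gap at its very first reduction: the pair of target inequalities you fix, $v_1(X_1')>2\,v_1(X_1^*)$ and $v_2(X_2^*\setminus X_1')\ge\tfrac12\,v_2(X_2^*)$, need \emph{not} be simultaneously achievable even when items are fully divisible and the slack $5/\kappa^{1/4}$ is taken into account. Take $T=X_2^*$ to consist of $m$ identical items with $v_{1j}=v_{2j}=1$ and $c_j=B_1/m$, and let $X_1^*$ have uniform $v_1$-density with $c(X_1^*)=B_1$ and $v_1(X_1^*)=m/(2+\epsilon)$, where $\epsilon=\tfrac{20}{k-10}$ is the slack from $\alpha^{-1}$. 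Your prefix $S$ with $v_2(S)=\rho\,v_2(X_2^*)$ then has $v_1(S)=\rho m$ and $c(S)=\rho B_1$; the densest $\hat X_1\subseteq X_1^*$ that still fits has value exactly $(1-\rho)\,v_1(X_1^*)$. Thus $v_1(X_1')=\rho m+(1-\rho)\,v_1(X_1^*)$, and $v_1(X_1')>2\,v_1(X_1^*)$ rearranges to $\rho(1+\epsilon)>1$, impossible for any $\rho\le\tfrac12$ once $\epsilon<1$. More tellingly, writing $u=v_1(X_1^*)/m$ one checks that the product $v_1(X_1')\,v_2(X_2')$ exceeds $v_1(X_1^*)\,v_2(X_2^*)$ only when $\rho<(1-2u)/(1-u)$, which for $u\approx\tfrac12$ forces $\rho=O(\epsilon)$. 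This is exactly the regime of the paper's tight example, and it shows that your ``optimization of $\rho\in(0,\tfrac12)$'' cannot rescue the fixed $2$-versus-$\tfrac12$ targets: any successful exchange must be an infinitesimal one, not one that moves half of agent~2's value.

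This is where the paper's proof departs from your plan. Rather than aim for factors $2$ and $\tfrac12$, the paper transfers a set $Y\subseteq T$ with $v_2(Y)\approx\tfrac{f}{k}\,v_2(X_2^*)$ and $c(Y)\approx\tfrac{1}{k}B_1$ (here $k=\kappa^{1/4}$), so that agent~$2$ keeps a $(1-\tfrac{f}{k}-O(k^{-3}))$ fraction while agent~$1$ reaches a $(1+\tfrac{f}{k}+\Omega(k^{-2}))$ multiple of $v_1(X_1^*)$; the product increases because the positive second-order term $\Omega(k^{-2})$ beats both $(f/k)^2$ and the rounding error. The heavy/light split also serves a different purpose from yours: heavy items are those with $v_{2j}\ge v_2(\hat T)/k^3$ (threshold $k^{-3}$, not $k^{-1}$), and each is shown individually to give a contradiction if moved --- this forces every heavy item to carry small $v_1$-share, pushing most of $v_1(\hat T)$ onto the light items. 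The light items are then partitioned into $k$ groups of nearly equal $v_2$-value and cost via a two-sided density sweep that creates at most four fractional items per group, each contributing at most $v_2(\hat T)/k^3$ in value and $B_1/k^4$ in cost. The exponent $1/4$ therefore arises from matching the number $k$ of groups against the item-cost bound $B_1/k^4$, not from the ``$1/\tau$ heavy items versus one light item of value $\tau$'' balance you describe.
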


Observe that when $\kappa \rightarrow \infty$ then the approximation ratio (with respect to EF1) approaches $1/2$.

Before we present the proof of Theorem~\ref{thm:large-capacity}, we first show that the $1/2$ approximation ratio is tight.
In particular, we give an example with arbitrarily large $\kappa$, for which a Max-NSW allocation is $(\frac{1}{2} + O(\frac{1}{\kappa}))$-EF1.

\begin{example}
	Let there be two agents $\{1,2\}$ with $B_1 = B_2 = \kappa$.
	Let there be $2\kappa$ items $M = M_1\cup M_2$ with cost $1$, where $M_1 = \{1,2,\ldots,\kappa\}$ and $M_2 = \{\kappa+1,\kappa+2,\ldots,2\kappa\}$.
	Let the valuations be $v_{1j} = 1$ for all $j\in M_1$; $v_{1j} = 2$ for all $j\in M_2$; 
	$v_{2j} = 0$ for all $j\in M_1$; $v_{2j} = 2$ for all $j\in M_2$ (see below).
	
	\begin{table}[htp]
		\begin{center}
			\begin{tabular}{ccccccc}
				\toprule
				$j$\quad~	&	1			&	$\ldots$		&	$\kappa$	&	$\kappa + 1$ & $\ldots$ & $2\kappa$  \\
				\midrule
				$v_{1j}$\quad~	&	1	& $\ldots$	& 1 & 2 & $\ldots$ & 2 \\
				%\hline
				$v_{2j}$\quad~	&	0	& $\ldots$	& 0 & 2 & $\ldots$ & 2 \\
				\bottomrule
			\end{tabular}
		\end{center}
	\end{table}
	
	Suppose that $x\leq \kappa$ items in $M_2$ are allocated to agent $2$ in a Max-NSW allocation, then it is optimal to allocate the remaining $\kappa-x$ items in $M_2$ together with $x$ items in $M_1$ to agent $1$.
	The NSW is thus given by
	\begin{equation*}
	v_1(X_1)\cdot v_2(X_2) = (2\kappa - x)\cdot 2x,
	\end{equation*}
	which is maximized when $x = \kappa$.
	That is, $X_1 = M_1$ and $X_2 = M_2$.
	Since for any $j\in X_2$,
	\begin{equation*}
	v_1(X_2 - j) = 2(\kappa-1) = \frac{2(\kappa-1)}{\kappa}\cdot v_1(X_1),
	\end{equation*}
	the Max-NSW allocation is $\frac{\kappa}{2\kappa-1}$-EF1, and the approximation ratio approaches $\frac{1}{2}$ when $\kappa\rightarrow \infty$.
\end{example}

\medskip

\iffalse
\paragraph{Main Idea.}
Before we present the proof of Theorem~\ref{thm:large-capacity}, we would like to provide some intuitions towards our analysis as follows.
Imagine the ideal case when items can be fractionally 

\medskip
\fi

\subsection{Proof of Theorem~\ref{thm:large-capacity}}

Next we prove Theorem~\ref{thm:large-capacity}.
For ease of notation we let $k = \kappa^{1/4}$.
Note that it suffices to consider the case when $k > 20$, as otherwise the theorem follows from Theorem~\ref{thm:Nash-sw-EF1}.

\begin{definition}[Density]
	Let the density of item $j\in M$ under the valuation of agent $i$ as $\rho_{ij} = \frac{v_{ij}}{c_j}$.
\end{definition}

We first prove the following lemma, which will be useful in the analysis.

\begin{lemma}\label{lemma:items-to-remove}
	Given two sets of items $X$ and $Y$, both have cost at most $B$, and a valuation function $v$, there exists a subset of items $Z\subseteq X\cup Y$ with $c(Z)\leq B$ and
	\begin{equation*}
	v(Z)\geq \left(1-\frac{c(Y)}{B}-\frac{1}{k^4} \right)\cdot v(X)+v(Y).
	\end{equation*}
\end{lemma}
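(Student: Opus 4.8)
The plan is to construct $Z$ greedily by starting from $X \cup Y$ and discarding low-density items (under $v$) until the budget constraint $c(Z) \le B$ is met, while keeping all of $Y$. Concretely, I would take $Z_0 = X \cup Y$; since $c(X) \le B$ and $c(Y) \le B$ we have $c(Z_0) \le 2B$, so we must remove items of total cost at most $c(X) + c(Y) - B \le c(X)$, and in fact we only need to remove items from $X$ (never from $Y$) of total cost at most $c(X \cup Y) - B$. Order the items of $X$ by decreasing density $\rho_j = v_j / c_j$ and remove them from the low-density end until the remaining set $Z$ satisfies $c(Z) \le B$. Let $R \subseteq X$ be the removed set; then $c(R) \le c(X\cup Y) - B + c_{j^\star}$ where $j^\star$ is the last item removed, and $c_{j^\star} \le B/\kappa = B/k^4$ by definition of $\kappa$.

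The value lost is $v(R)$, and since $R$ consists of the lowest-density items of $X$, its average density is at most the average density of $X$, i.e. $v(R)/c(R) \le v(X)/c(X)$, hence $v(R) \le \frac{c(R)}{c(X)} v(X)$. Plugging in the bound on $c(R)$ gives
\begin{equation*}
v(R) \le \frac{c(X \cup Y) - B + B/k^4}{c(X)}\cdot v(X) = \frac{c(X) + c(Y) - B + B/k^4}{c(X)}\cdot v(X).
\end{equation*}
Therefore $v(Z) = v(X) + v(Y) - v(R) \ge v(X)\bigl(1 - \frac{c(Y) - B + B/k^4}{c(X)}\bigr) + v(Y)$. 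The remaining step is to argue $\frac{c(Y) - B + B/k^4}{c(X)} \le \frac{c(Y)}{B} + \frac{1}{k^4}$, which follows from $c(X) \le B$ (so dividing the nonpositive-or-small numerator appropriately) — more carefully, one should split into the case $c(Y) \ge B$ and $c(Y) < B$, or simply observe that since $c(X) \le B$, we have $\frac{c(Y) - B}{c(X)} \le \frac{c(Y) - B}{B} \cdot \frac{B}{c(X)}$; this needs a small case analysis on the sign of $c(Y) - B$ to handle the direction of the inequality when dividing by $c(X) \le B$. I would write: if $c(Y) \le B$ then $c(Y) - B \le 0$ so $\frac{c(Y)-B}{c(X)} \le \frac{c(Y)-B}{B} \le \frac{c(Y)}{B} - 1 \le \frac{c(Y)}{B}$ (dropping the $-1$ only makes it larger, but we need an \emph{upper} bound, so keep it), and separately $\frac{B/k^4}{c(X)}$ — here $c(X)$ could be small, which is the subtle point.

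The main obstacle is exactly this last point: if $c(X)$ is very small, then $\frac{B/k^4}{c(X)}\cdot v(X)$ could be large relative to $v(X)$ — but note it equals $\frac{B}{k^4 c(X)} v(X) = \frac{B \rho}{k^4}$ where $\rho = v(X)/c(X)$ is the density, which is not obviously bounded by $\frac{v(X)}{k^4}$. To fix this I would instead bound $v(R)$ differently when the removed set is small: since only the last removed item $j^\star$ causes the overshoot, I would remove items until $c(Z) \le B$, and observe that before removing $j^\star$ the cost exceeded $B$, i.e. $c(Z + j^\star) > B \ge c(X)$, which forces $c(Z+j^\star) \setminus X$... hmm. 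The cleaner route: remove items one at a time from lowest density; at the moment we first reach $c(Z) \le B$, the previous set $Z' = Z + j^\star$ had $c(Z') > B$. All of $Y$ is still in $Z$. Then $v(R) = v(X) - v(Z \cap X) $, and each removed item has density at most the density of every kept item of $X$, so $v(R) \le \frac{c(R)}{c(X) - c(R) + c(R)} \cdot$... I would ultimately phrase the density-averaging bound as $v(R) \le \frac{c(R)}{c(X)} v(X)$ (valid since $R$ is the low-density part), and then handle the $c_{j^\star} \le B/\kappa$ term by noting $\frac{c_{j^\star}}{c(X)} v(X) \le \frac{c_{j^\star}}{c(X)} v(X)$ and using that $v(X)/c(X)$ is at most the \emph{maximum} density times... — so the genuinely careful argument is to define $Z$ as the high-density prefix and bound its complement's cost by $c(X) + c(Y) - B + c_{j^\star}$ with $c_{j^\star}/c(X) \le c_{j^\star} \cdot \rho_{\min}^{-1}$... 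I will present the clean version assuming the overshoot item is handled by $c_{j^\star} \le B/k^4$ and $v(R)/c(R)\le v(X)/c(X)$ together with $c(X)\le B$, doing the two-case split $c(Y)\lessgtr B$ explicitly in the final inequality.
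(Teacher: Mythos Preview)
Your approach is the same as the paper's: keep all of $Y$, and discard the lowest-density items of $X$ until the budget is met. The ``obstacle'' you flag is not real; it comes from an algebra slip and from trying to bound two pieces separately that should be combined.

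Concretely, from $v(R)\le \frac{c(R)}{c(X)}\,v(X)$ and $c(R)\le c(X)+c(Y)-B+\tfrac{B}{k^4}$ you get
\[
v(Z)\ \ge\ v(Y)+v(X)\Bigl(1-\tfrac{c(X)+c(Y)-B+B/k^4}{c(X)}\Bigr)
\ =\ v(Y)+v(X)\cdot \frac{B-c(Y)-B/k^4}{c(X)}.
\]
(You dropped the $c(X)$ in the numerator when you wrote $1-\frac{c(Y)-B+B/k^4}{c(X)}$; once restored, the expression simplifies as above.) The target is $v(Y)+v(X)\cdot\frac{B-c(Y)-B/k^4}{B}$, so what you actually need is
\[
\frac{B-c(Y)-B/k^4}{c(X)}\ \ge\ \frac{B-c(Y)-B/k^4}{B},
\]
which is immediate from $c(X)\le B$ whenever the common numerator is nonnegative. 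If the numerator is negative (i.e.\ $c(Y)>B(1-1/k^4)$), the right-hand side of the lemma is below $v(Y)$ and $Z=Y$ already works. So there is no issue with small $c(X)$: the $\frac{B/k^4}{c(X)}$ term never needs to be bounded in isolation.

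The paper avoids even this one-line case split by a clean normalization: assume without loss of generality that $c(X)=B$, padding $X$ with zero-value items of tiny cost if necessary. Then the removed cost is at most $c(Y)+\tfrac{B}{k^4}$, the average density of the removed items is at most $v(X)/B$, and the bound $v(R)\le\bigl(\tfrac{c(Y)}{B}+\tfrac{1}{k^4}\bigr)v(X)$ drops out directly with no division by $c(X)$ at all.
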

\begin{proof}
	We prove by constructing the subset $Z$ satisfying the claimed properties.
	In particular, we compute a subset $X' \subseteq X$ with $c(X') \leq B - c(Y)$, and
	\begin{equation*}
	v(X') \geq \left(1-\frac{c(Y)}{B}-\frac{1}{k^4} \right)\cdot v(X).
	\end{equation*}
	Then the lemma follows immediately by setting $Z = X'\cup Y$.
	
	To ease the analysis, we assume without loss of generality that $c(X) = B$.
	If $c(X) < B$ then imagine that we include into $X$ many items with $0$ value and arbitrarily small cost until $c(X) = B$, which does not increase the value of $X$.
	We construct $X'$ by repeatedly removing an item in $X$ with the smallest density (under valuation function $v$), until $c(X') \leq B - c(Y)$.
	Since each item has cost at most $\frac{B}{k^4}$ by definition of $k$, the total cost of items removed in this procedure is at most $c(Y) + \frac{B}{k^4}$.
	Since the items removed are among those with the smallest density, their average density is at most the overall density of items in $X$; the total value of items removed is at most
	\begin{equation*}
	\left( c(Y) + \frac{B}{k^4} \right)\cdot \frac{v(X)}{c(X)} = \left( \frac{c(Y)}{B} + \frac{1}{k^4} \right)\cdot v(X),
	\end{equation*}
	
	Thus we have
	\begin{align*}
	v(Z) =  v(X') + v(Y) 
	\geq  \left(1-\frac{c(Y)}{B}-\frac{1}{k^4} \right)\cdot v(X) + v(Y),
	\end{align*}
	as claimed.
\end{proof}

As before, we prove by contradiction, and assume the allocation is not $(\frac{1}{2}-\frac{5}{k} )$-EF1, say, between agents $1$ and $2$.
We show that this assumption leads to contradictions.

\smallskip

Let $T\subseteq X_2$ with $c(T)\leq B_1$ such that
\begin{align}
v_1(T - j^*) > &\left(\frac{1}{2} - \frac{5}{k} \right)^{-1}\cdot v_1(X_1)
=  \left( 2+\frac{20}{k-10} \right)\cdot v_1(X_1) \label{eqn:hat-T-in-agent-1},
\end{align}
where $j^* = \arg\max_{j\in T}v_{2j}$ is the item in $T$ with the maximum value under the valuation of agent $2$.

Let $\hat{T} = T-j^*$.
For each item $j\in \hat{T}$, we refer to $v_{2j}/v_2(\hat{T})$ as the \emph{contribution} of item $j$ to set $\hat{T}$ under valuation of agent $2$.
The total contribution of items in $\hat{T}$ is $1$.
Depending on the contributions of items, we partition $\hat{T}$ into a set $T_h$  of \emph{heavy items} and a set $T_l$ of \emph{light items}:
\begin{align*}
T_h  := \left\{ j\in \hat{T} : \frac{v_{2j}}{v_2(\hat{T})} \geq \frac{1}{k^3}  \right\}, \quad
\text{ and } \quad
T_l  := \hat{T} \setminus T_h.
\end{align*}

We first show that under the valuation of agent $1$, the contribution of heavy items must be small.

\begin{claim}
	For all $j\in T_h$, we have $\frac{v_{1j}}{v_1(\hat{T})} < \frac{1}{2}\cdot \frac{v_{2j}}{v_2(\hat{T})}$.
\end{claim}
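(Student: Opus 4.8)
The plan is to argue by contradiction: suppose some heavy item $j\in T_h$ satisfies $\frac{v_{1j}}{v_1(\hat{T})} \ge \frac{1}{2}\cdot\frac{v_{2j}}{v_2(\hat{T})}$. I would use this single heavy item $j$ as the bundle to move to agent $1$, and show that the resulting reallocation strictly increases the NSW, contradicting Max-NSW. Concretely, form $\mathbf{X}'$ with $X_1' $ consisting of $\{j\}$ together with a high-value, low-cost portion of $X_1$ salvaged via Lemma~\ref{lemma:items-to-remove}, $X_2' = X_2 - j$, and the rest donated to the charity. The cost of $\{j\}$ is at most $B_1/k^4$ (definition of $\kappa$ and $k=\kappa^{1/4}$), so applying Lemma~\ref{lemma:items-to-remove} with $X = X_1$ (rescaled so $c(X_1)=B_1$, as in the lemma's proof) and $Y = \{j\}$ yields $Z \subseteq X_1 + j$ with $c(Z)\le B_1$ and $v_1(Z) \ge (1 - \frac{2}{k^4})\, v_1(X_1) + v_{1j}$.

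The key quantitative step is to lower bound $v_{1j}$. Since $j$ is heavy, $\frac{v_{2j}}{v_2(\hat{T})}\ge\frac1{k^3}$, so by the contradiction hypothesis $v_{1j}\ge \frac12\cdot\frac1{k^3}\cdot v_1(\hat{T})$. Combining with \eqref{eqn:hat-T-in-agent-1}, which gives $v_1(\hat{T}) = v_1(T-j^*) > 2\, v_1(X_1)$, we get $v_{1j} > \frac{1}{k^3}\, v_1(X_1)$. Plugging into the bound from Lemma~\ref{lemma:items-to-remove},
\[
v_1(X_1') \ge v_1(Z) > \Bigl(1 - \frac{2}{k^4} + \frac{1}{k^3}\Bigr)\, v_1(X_1) > v_1(X_1),
\]
where the last inequality uses $k>20$ so that $\frac1{k^3} > \frac{2}{k^4}$. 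For agent $2$, removing the single item $j$ — which is not $j^*$, hence $v_{2j}\le v_2(\hat T)\le v_2(X_2)$ and in fact $v_{2j}$ is at most the max value in $\hat T$ — we need $v_2(X_2') > 0$; this should follow because $T$ blocking EF1 forces $v_2(\hat T)>0$ with contributions summing to $1$, and $j$ alone cannot carry the whole value of $X_2$ unless $X_2 = \{j, j^*\}$-type degenerate case, which can be handled separately or ruled out since then $v_1(\hat T) = v_{1j}$ and the blocking inequality already contradicts Max-NSW directly. Then $v_1(X_1')\, v_2(X_2') \ge v_1(X_1')\, v_2(X_2) > v_1(X_1)\, v_2(X_2)$ and all other agents are untouched, contradicting Max-NSW.

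The main obstacle I anticipate is the bookkeeping around agent $2$'s value: one must ensure $v_2(X_2 - j) > 0$ strictly, so that the NSW product genuinely increases rather than staying at zero. The cleanest route is to observe that if $v_2(X_2 - j) = 0$ then $v_2(X_2) = v_{2j}$, but also $v_2(\hat T) \ge v_{2j}$ combined with $\hat T \subseteq X_2$ and the heaviness $v_{2j}\ge v_2(\hat T)/k^3$ would force $v_2(\hat T)$ to be tiny relative to $v_{2j}$ — a contradiction unless $v_2(\hat T) = v_{2j}$, i.e. $\hat T$ is essentially the single item $j$; that boundary case is then dispatched by noting $v_1(\hat T) = v_{1j} > 2 v_1(X_1)$ already lets us move just $j$ with salvage and win. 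A secondary subtlety is that Lemma~\ref{lemma:items-to-remove} is stated for $X,Y$ each of cost at most $B$; here $Y=\{j\}$ trivially satisfies $c(j)\le B_1/k^4 \le B_1$, and $X_1$ has $c(X_1)\le B_1$ since $\mathbf X^*$ is feasible, so the hypotheses are met. I would present the degenerate case first to keep the main argument clean.
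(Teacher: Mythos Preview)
Your overall strategy---move the single heavy item $j$ to agent~$1$, salvage most of $X_1$ via Lemma~\ref{lemma:items-to-remove}, and derive a contradiction---matches the paper's. But the argument has a real gap in the treatment of agent~$2$.

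You write ``Then $v_1(X_1')\, v_2(X_2') \ge v_1(X_1')\, v_2(X_2) > v_1(X_1)\, v_2(X_2)$''. The first inequality asserts $v_2(X_2') \ge v_2(X_2)$, which is false: $X_2' = X_2 - j$ and $v_{2j} > 0$ (since $j$ is heavy), so $v_2(X_2') < v_2(X_2)$ strictly. Showing $v_2(X_2') > 0$ is not the issue; the issue is that agent~$2$'s value \emph{drops}, and you must show agent~$1$'s gain outweighs that drop multiplicatively. Your bound $v_1(X_1') > (1 + \tfrac{1}{k^3} - \tfrac{2}{k^4})\, v_1(X_1)$ is far too weak for this: if, say, $v_{2j}$ is comparable to $v_2(\hat T)$ (nothing forbids $\tfrac{v_{2j}}{v_2(\hat T)}$ from being close to $1$), then agent~$2$ can lose a constant fraction of her value, while you have only established an $O(1/k^3)$ gain for agent~$1$.

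The fix is to keep the ratio $c := v_2(\hat T)/v_{2j} \in [1,k^3]$ as a parameter rather than immediately replacing it by its worst case $k^3$. On agent~$2$'s side, since $j^* \in X_2$ with $v_{2j^*} \ge v_{2j}$, one gets $v_2(X_2') \ge \bigl(1 - \tfrac{1}{c+1}\bigr) v_2(X_2)$. On agent~$1$'s side, the contradiction hypothesis gives $v_{1j} \ge \tfrac{1}{2c}\, v_1(\hat T)$; combining this with the \emph{full} strength of \eqref{eqn:hat-T-in-agent-1} (the $\tfrac{20}{k-10}$ slack, not just the factor $2$) and the salvage lemma yields $v_1(X_1') > \tfrac{c+1}{c}\, v_1(X_1)$, valid for all $c \le k^3$. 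The two factors then cancel and the product strictly increases. The point is that the same parameter $c$ governs both the loss and the gain, and only by tracking it on both sides do the bounds match up.
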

\begin{proof}
	Suppose for the sake of contradiction that there exists $j\in T_h$ such that
	\begin{equation*}
	\frac{v_{1j}}{v_1(\hat{T})} \geq \frac{1}{2}\cdot \frac{v_{2j}}{v_2(\hat{T})}.
	\end{equation*}
	We show that there exists an allocation (by reallocating item $j$) that achieves a strictly larger NSW, which is a contradiction.
	Let $c = \frac{v_2(\hat{T})}{v_{2j}}$.
	By definition of heavy item, we have $c \leq k^3$.
	Consider the allocation obtained by moving item $j$ from $\hat{T_h}$ to $X_1$.
	Since $v_{2j}\leq v_{2j^*}$, we have
	\begin{align*}
	v_2(X'_2) = & \left( 1 - \frac{v_{2j}}{v_2(X_2)} \right)\cdot v_2(X_2)
	\geq  \left( 1 - \frac{v_{2j}}{v_2(\hat{T}) + v_{2j^*}} \right)\cdot v_2(X_2) 
	\geq \left( 1 - \frac{1}{c+1} \right)\cdot v_2(X_2).
	\end{align*}
	
	Note that including $j$ into $X_1$ may result in a violation of the budget constraint of agent $1$.
	To resolve this issue, we use Lemma~\ref{lemma:items-to-remove} with $B = B_1$, $v = v_1$, $X = X_1$ and $Y = \{j\}$, and let $X'_1 = Z$ (as specified in Lemma~\ref{lemma:items-to-remove}).
	Since item $j$ has cost $c_{j} \leq \frac{B_1}{k^4}$, we have
	\begin{equation*}
	v_1(X'_1) \geq \left( 1-\frac{2}{k^4} \right)\cdot v_1(X_1) + v_{1j}.
	\end{equation*}
	
	Recall that by assumption of the proof we have
	\begin{equation*}
	v_{1j} \geq \frac{1}{2}\cdot v_1(\hat{T})\cdot \frac{v_{2j}}{v_2(\hat{T})} = \frac{1}{2c}\cdot v_1(\hat{T}).
	\end{equation*}
	
	By inequality~\eqref{eqn:hat-T-in-agent-1}, we have
	\begin{align*}
	v_1(X'_1) \geq & \left( 1-\frac{2}{k^4} + \frac{1}{2c}\cdot \left( 2+\frac{20}{k-10} \right) \right)\cdot v_1(X_1) \\
	= & \left( 1 + \frac{1}{c} + \frac{10}{c(k-10)}-\frac{2}{k^4} \right)\cdot v_1(X_1)  \\
	> & \frac{c+1}{c}\cdot v_1(X_1),
	\end{align*}
	where the last inequality holds since $c\leq k^3$.
	
	Hence, we have $v_1(X'_1)\cdot v_2(X'_2) > v_1(X_1)\cdot v_2(X_2)$ and
	\begin{equation*}
	\prod_{i \in N} v_i(X'_i) > \prod_{i \in N} v_i(X_i),
	\end{equation*}
	which is a contradiction.
\end{proof}

Let $f = \frac{v_2(T_l)}{v_2(\hat{T})}$ be the total contribution of light items to $\hat{T}$.
By the above claim we have
\begin{equation*}
\frac{v_1(T_l)}{v_1(\hat{T})} = 1 - \frac{v_1(T_h)}{v_1(\hat{T})} > 1 - \frac{1}{2}\cdot \frac{v_2(T_h)}{v_2(\hat{T})} = \frac{1}{2} + \frac{f}{2}.
\end{equation*}
In other words, under the valuation of agent $1$, the light items have a larger total contribution to $\hat{T}$.

Next, we prove the following lemma.
Roughly speaking, we can partition the light items into $k$ sets with about the same cost and value.

\begin{lemma}\label{lemma:even-fractional-partition}
	We can partition $T_l$ into $k$ sets such that each set $Y$ satisfies
	\begin{equation*}
	v_2(Y) \leq \left( \frac{f}{k}+\frac{4}{k^3} \right)\cdot v_2(\hat{T}), \;
	c(Y) \leq \left( \frac{1}{k}+\frac{4}{k^4} \right)\cdot B_1.
	\end{equation*}
\end{lemma}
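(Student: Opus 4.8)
The plan is to recast this as a bipartite assignment (generalized-assignment) LP, solve it fractionally in the obvious balanced way, take a \emph{vertex} of the feasible polytope, and round it with only $O(1)$ loss per part.

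Concretely, introduce variables $x_{ij}\in[0,1]$ for $i\in\{1,\dots,k\}$ and $j\in T_l$, and consider the polytope defined by $\sum_{i=1}^k x_{ij}=1$ for every $j\in T_l$, by $\sum_{j\in T_l} v_{2j}\,x_{ij}\le \tfrac1k\, v_2(T_l)$ for every $i$, by $\sum_{j\in T_l} c_j\,x_{ij}\le \tfrac1k\, B_1$ for every $i$, and by $x\ge 0$. The uniform point $x_{ij}=1/k$ lies in this polytope --- here we use $c(T_l)\le c(T)\le B_1$ --- so the polytope is nonempty and bounded, hence has a vertex $x^\star$. A standard counting argument at a vertex of a polytope in $\mathbb{R}^{k|T_l|}$ shows that the number of strictly positive coordinates is at most the number of tight non-sign constraints, which is at most $|T_l|+2k$ (the $|T_l|$ assignment equalities plus the $2k$ capacity inequalities). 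Since every item $j$ that is not integrally assigned ($x^\star_{ij}\neq 1$ for all $i$) contributes at least two positive coordinates in its column, whereas an integrally assigned item contributes exactly one, the number of fractionally assigned items is at most $2k$.

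Given $x^\star$, I build the $k$ parts as follows. First, place each integrally assigned item $j$ (i.e.\ with $x^\star_{ij}=1$) into part $Y_i$. Because $\sum_{j:\,x^\star_{ij}=1}v_{2j}\le \sum_{j}v_{2j}x^\star_{ij}\le \tfrac1k v_2(T_l)$, and likewise for cost, at this stage every part has value at most $\tfrac1k v_2(T_l)=\tfrac fk v_2(\hat T)$ and cost at most $\tfrac1k B_1$. Then distribute the at most $2k$ remaining (fractionally assigned) items arbitrarily among the $k$ parts, at most two per part (possible since $2k$ items fit $k$ parts with at most two each). Every such item is light, so it adds value strictly less than $\tfrac1{k^3}v_2(\hat T)$, and every item has cost $c_j\le B_1/\kappa=B_1/k^4$; hence each part gains at most $\tfrac2{k^3}v_2(\hat T)$ in value and $\tfrac2{k^4}B_1$ in cost. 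Summing the two stages, every part $Y$ satisfies $v_2(Y)\le(\tfrac fk+\tfrac2{k^3})v_2(\hat T)\le(\tfrac fk+\tfrac4{k^3})v_2(\hat T)$ and $c(Y)\le(\tfrac1k+\tfrac2{k^4})B_1\le(\tfrac1k+\tfrac4{k^4})B_1$, as claimed.

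The one place needing care --- and the reason a purely greedy construction does not obviously work --- is that we must simultaneously control both the value and the cost of every part, with only $O(1/k^3)$ and $O(1/k^4)$ additive slack: a list-scheduling-type greedy balances one coordinate perfectly but can leave the other badly unbalanced, and simply dumping all leftover items into a few parts would cost $\Theta(1/k^2)$ rather than $\Theta(1/k^3)$ in value. The LP-vertex step is precisely what yields a near-integral solution in which only $O(k)$ items are split, so that they can be spread $O(1)$ per part; the remaining work is the routine vertex counting and the load bookkeeping sketched above, both of which use the definition of "light" item and the bound $c_j\le B_1/k^4$ coming from $\kappa=k^4$.
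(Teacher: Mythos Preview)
Your argument is correct and takes a genuinely different route from the paper's.

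The paper constructs the fractional partition \emph{explicitly}: it sorts the items in $T_l$ by density $\rho_{2j}=v_{2j}/c_j$ and, for each part $Y_i$, selects two contiguous blocks --- one from the high-density end and one from the low-density end --- choosing the breakpoints $b_i$ by an intermediate-value argument so that $c(Y_i)=c(T_l)/k$ and $v_2(Y_i)=v_2(T_l)/k$ simultaneously. This yields at most four fractional items per part (one per interval endpoint), and rounding those gives the $4/k^3$ and $4/k^4$ additive losses.

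Your approach instead invokes the basic-feasible-solution structure of a generalized-assignment LP: feasibility at the uniform point, boundedness, and then the standard rank count showing a vertex has at most $|T_l|+2k$ positive coordinates, hence at most $2k$ fractionally assigned items, which you spread two per part. This is cleaner and more general --- it requires no problem-specific construction, applies verbatim with more than two side constraints per part, and in fact yields slightly better constants ($2/k^3$ and $2/k^4$) than the paper needs. The paper's proof, on the other hand, is fully constructive (no appeal to an LP solver or vertex enumeration) and gives some geometric insight into \emph{why} balancing value and cost simultaneously is possible: densities interpolate continuously as one slides the breakpoint. Either proof suffices for the lemma as stated.
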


For continuity of presentation, we defer the proof of the above lemma to the end of this section.

\smallskip

We partition $T_l$ into $k$ sets as specified in Lemma~\ref{lemma:even-fractional-partition}, then pick the one most valuable to agent $1$ out of the $k$ sets, and reassign it to agent $1$.
Let $Y$ be the set chosen.
Note that $Y$ satisfies
\begin{align*}
v_1(Y) \geq \frac{1}{k}\cdot v_1(T_l) \geq & \frac{1}{k}\left(\frac{1}{2}+\frac{f}{2}\right)\cdot v_1(\hat{T}) 
\geq  \frac{1}{k}\left(\frac{1}{2}+\frac{f}{2}\right)\left( 2+\frac{20}{k-10} \right)\cdot v_1(X_1).
\end{align*}

After losing set $Y$, the remaining value of agent $2$ is
\begin{align}
v_2(X'_2) =  v_2(X_2) - v_2(Y)
\geq  \left( 1-\frac{f}{k}-\frac{4}{k^3} \right) \cdot v_2(X_2), \label{eqn:bound-of-X2}
\end{align}
where the inequality holds by Lemma~\ref{lemma:even-fractional-partition}.
After acquiring set $Y$, agent $1$ has the set of items $X'_1 = X_1\cup Y$, which might have a larger cost than $B_1$.
Similarly, we use Lemma~\ref{lemma:items-to-remove}, with $B = B_1$, $v=v_1$ and $X = X_1$.
Let $X'_1 = Z$ be the set specified in Lemma~\ref{lemma:even-fractional-partition}.

As a result, the bundle agent $1$ receives has value
\begin{align}
 v_1(X'_1) &\geq \left( 1 - (\frac{1}{k}+\frac{5}{k^4}) + \frac{1}{k}(\frac{1}{2}+\frac{f}{2})( 2+\frac{20}{k-10} ) \right)\cdot v_1(X_1) \nonumber \\
&=  \left( 1+\frac{f}{k} + \frac{10(1+f)}{k(k-10)} - \frac{5}{k^4} \right)\cdot v_1(X_1) >  \left( 1+\frac{f}{k} + \frac{9}{k^2} \right)\cdot v_1(X_1),\label{eqn:bound-of-X1}
\end{align}
where in the last inequality we use $f\geq 0$ and $k \geq 20$.

Combining inequalities~\eqref{eqn:bound-of-X1} and~\eqref{eqn:bound-of-X2}, we have
\begin{align*}
 \frac{v_1(X'_1) v_2(X'_2)}{v_1(X_1) v_2(X_2)} 
&>  \left( 1-\frac{f}{k}-\frac{4}{k^3} \right)\cdot \left( 1+\frac{f}{k} + \frac{9}{k^2} \right)  \\
&=  \left( 1-(\frac{f}{k}+\frac{4}{k^3})^2 + (\frac{9}{k^2} -\frac{4}{k^3}) ( 1-\frac{f}{k}-\frac{4}{k^3} ) \right) \\
&\geq \left( 1-(\frac{f}{k}+\frac{4}{k^3})^2 + \frac{5}{k^2} \right) >  1.
\end{align*}

It follows that
\begin{equation*}
\prod_{i \in N} v_i(X'_i) > \prod_{i \in N} v_i(X_i),
\end{equation*}
which is a contradiction.

\subsection{Proof of Lemma~\ref{lemma:even-fractional-partition}}
%\medskip

%It remains to prove Lemma~\ref{lemma:even-fractional-partition}.

\begin{proofof}{Lemma~\ref{lemma:even-fractional-partition}}
	Let $S = c(T_l)$ be the total cost of items in $T_l$.
	Recall that $S\leq B_1$.
	We first show that we can partition the set $T_l$ \emph{fractionally} into $k$ sets $Y_1,Y_2,\ldots,Y_k$ with equal cost and value.
	That is, for all $i\in[k]$, we have
	\begin{equation}
	c(Y_i) = \frac{1}{k}\cdot S \text{ and } v_2(Y_i) = \frac{1}{k}\cdot v_2(T_l). \label{eqn:fractional-partition}
	\end{equation}
	
	In general, obtaining such fractional partitioning is easy because we can simply assign each $Y_i$ a $\frac{1}{k}$ fraction of every item in $T_l$.
	However, this would result in too many fractional items.
	The key to our analysis is to show that we can fractionally and evenly partition $T_l$ such that each $Y_i$ contains at most $4$ fractional items.
	Given such a fractional partitioning $(Y_1,Y_2,\ldots,Y_k)$ of $T_l$, we round the fractional items arbitrarily, e.g., assign the fractional item to the set $Y_i$ containing it with the smallest $i$. 
	Hence we obtain an integral partition $(Y'_1,Y'_2,\ldots,Y'_k)$ of $T_l$.
	Moreover, since each set $Y_i$ contains at most $4$ fractional items and each (integral) item has cost at most $\frac{B_1}{k^4}$, we have
	\begin{align*}
	c(Y'_i) \leq & c(Y_i) + 4\cdot \frac{B_1}{k^4} 
	= \frac{1}{k}\cdot S + \frac{4}{k^4}\cdot B_1 \leq 
	\left( \frac{1}{k}+\frac{4}{k^4} \right)\cdot B_1 .
	\end{align*}
	By definition of light items, each $j\in T_l$ has value $v_{2j} < \frac{1}{k^3}\cdot v_2(\hat{T})$.
	Recall that $v_2(T_l) = f\cdot v_2(\hat{T})$, we have
	\begin{align*}
	 v_2(Y'_i) \leq & v_2(Y_i) + \frac{4}{k^3}\cdot v_2(\hat{T}) 
	=  \frac{1}{k}\cdot v_2(T_l) + \frac{4}{k^3}\cdot v_2(\hat{T})
	= \left( \frac{f}{k}+\frac{4}{k^4} \right)\cdot v_2(\hat{T}) .
	\end{align*}
	
	Hence both conditions are satisfied.		
	It remains to compute the sets $Y_1,Y_2,\ldots,Y_k$.
	
	\smallskip
	
	We first introduce some notation for fractional sets.
	Each fractional item set $F$ can be represented by a vector $\mathbf{y} \in [0,1]^{|T_l|}$, where $\mathbf{y} = (y_1,\ldots,y_{|T_l|})$ and $y_j\in [0,1]$ represents the fraction of item $j\in T_l$ that is included in $F$.
	The cost of $F$ is denoted by $c(F) = \sum_{j\in T_l} y_j$.
	
	Given $T_l$ and $b\leq S = c(T_l)$, a fractional item set $T_l[0,b]$ is defined as follows.
	We initialize $T_l[0,b]$ to an empty set, and then repeatedly include into this set a densest item from $T\setminus T_l[0,b]$ (under the valuation of agent $2$), until the cost of $T_l[0,b]$ will exceed $b$ if we include the next such item, say item $j$, into $T_l[0,b]$; we then include into $T_l[0,b]$ only a fraction of $j$ that will make $c(T_l[0,b]) = b$ (see Figure~\ref{fig:T_l[0,b]}).
	
	\begin{figure}[h]
		\centering
		\vspace*{-5pt}
		\includegraphics*[width=0.4\textwidth]{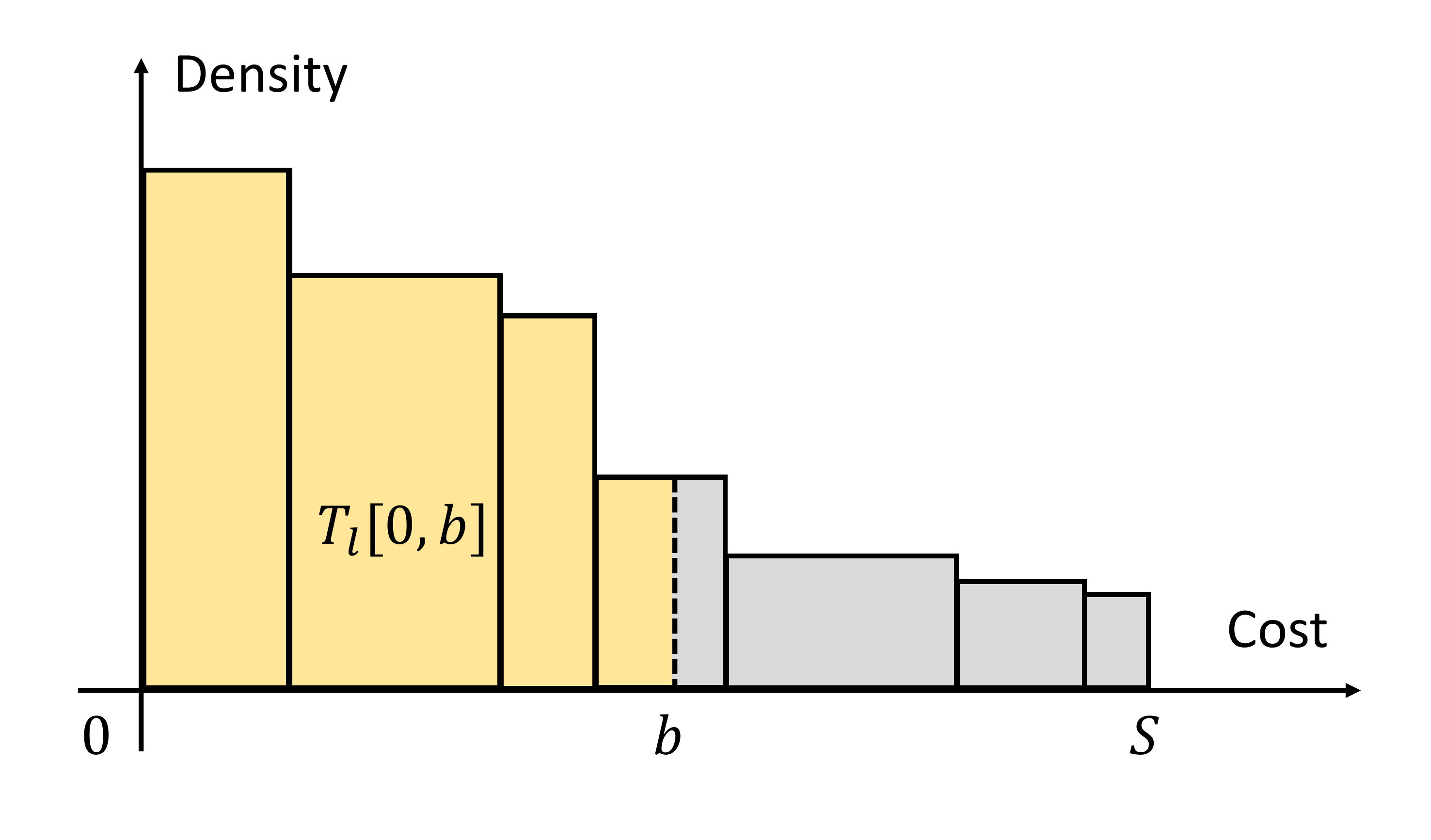}
		\vspace*{-15pt}
		\caption{Illustrating example of $T_l[0,b]$, where each item is represented as a rectangle with the height being its density and width being its cost. The items are listed in descending order of their densities.}
		\label{fig:T_l[0,b]}
	\end{figure}
	
	Equivalently, we describe the fractional item set $T_l[0,b]$ as a vector $\mathbf{y}\in [0,1]^{T_l}$, where for each $j\in T_l$,
	\begin{equation*}
	y_j = \min \left\{ 1,\ \frac{1}{c_j}\cdot \max\Big\{ 0, b-\sum_{j':\rho_{2j'}>\rho_{2j}} c_{j'} \Big\} \right\}.
	\end{equation*}
	
	Similarly, for each $0\leq a\leq b\leq S$ we define $T_l(a,b]$ to be the fractional item set obtained by letting
	\begin{equation*}
	T_l(a,b] := T_l[0,b] \setminus T_l[0,a].
	\end{equation*}
	
	\smallskip
	
	Next we construct the sets $Y_1,Y_2,\ldots,Y_k$.
	To construct $Y_1$, we use a parameter $b_1\in [0,\frac{S}{k}]$ and let
	\begin{equation}
	\label{eq:def-Y1}
	Y_1 = T_l[0,b_1] \cup T_l\left( \frac{k-1}{k}\cdot S + b_1,S \right].
	\end{equation}
	Note that for any $b_1 \in [0,\frac{S}{k}]$, we have $c(Y_1) = \frac{S}{k}$, whereas the value $v_2(Y_1)$ increases (weakly) with $b_1$.
	For $b_1 = 0$, $Y_1 = T_l\left( \frac{k-1}{k}\cdot S,S \right]$ contains the fractional items in $T_l$ with the smallest densities; for $b_1 = \frac{S}{k}$, $Y_1 = T_l\left[0, \frac{S}{k} \right]$ contains the fractional items in $T_l$ with the largest densities.
	Hence, there exists a value $b_1 \in [0,\frac{S}{k}]$ which makes $v_2(Y_1) = \frac{1}{k}\cdot v_2(T_l)$.
	We fix $b_1$ to this value and let $Y_1$ be the corresponding set defined by Equation~\eqref{eq:def-Y1}. Next we construct $Y_2$.
	
	\smallskip
	
	Note that after fixing $Y_1$, the remaining fractional item set is $T_l\left( b_1, \frac{k-1}{k}\cdot S+b_1 \right]$, which has cost $\frac{k-1}{k}\cdot S$ and value
	\begin{equation*}
	v_2(T_l) - v_2(Y) = \frac{k-1}{k}\cdot v_2(T_l).
	\end{equation*}
	In other words, the remaining items have average density exactly the same as $\rho_2(T_l) = \frac{v_2(T_l)}{c(T_l)}$.
	Thus we can construct $Y_2$ the same way we constructed $Y_1$:
	for $b_2 \in [b_1, \frac{S}{k} + b_1]$, let
	\begin{equation*}
	Y_2 = T_l( b_1, b_2 ] \cup T_l\left( \frac{k-2}{k}\cdot S+b_2, \frac{k-1}{k}\cdot S+b_1 \right].
	\end{equation*}
	Similarly, for any $b_2$ we have $c(Y_2) = \frac{S}{k}$, and the value $v_2(Y_2)$ increases with $b_2$. We can find a value for $b_2$ that makes $v_2(Y_2) = \frac{1}{k}\cdot v_2(T_l)$ and let $Y_2$ be the corresponding set defined by Equation~\eqref{eq:def-Y1}.
	
	\begin{figure}[htb]
		\centering
		\vspace*{-5pt}
		\includegraphics*[width=0.4\textwidth]{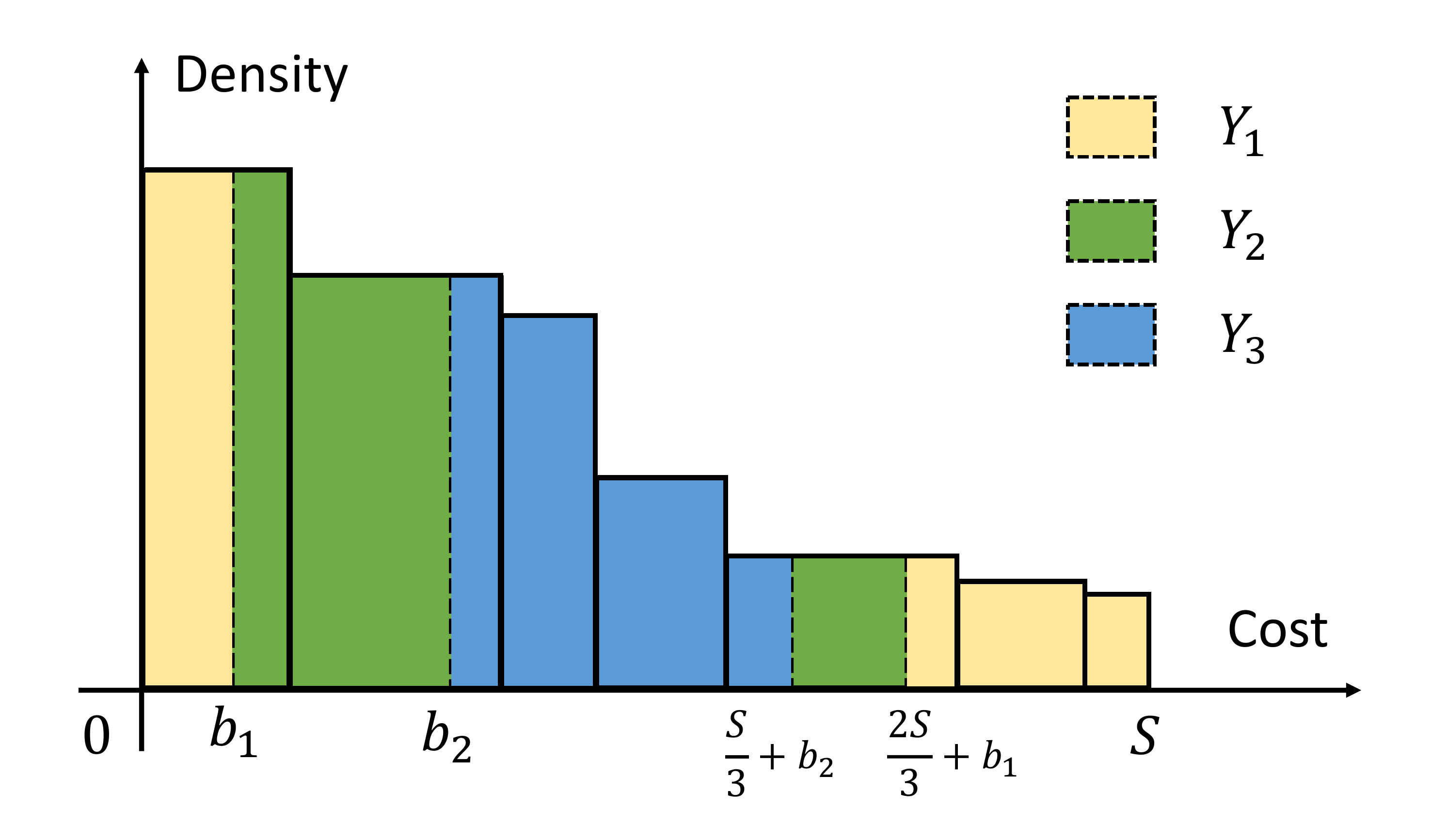}
		\vspace*{-15pt}
		\caption{Partitioning $T_l$ into disjoint fractional item sets: an example with $k=3$.}
		\label{fig:Y}
	\end{figure}

	By repeating the above procedure, we can construct the subsequent sets $Y_3,\dots, Y_k$ and obtain $k$ fractional sets with the equal cost and value (see Figure~\ref{fig:Y}).
	Note that each fractional set $Y_i$ contains at most four fractional items, one for each interval boundary in Equation~\eqref{eq:def-Y1}.
	Hence, the sets $Y_1,\ldots,Y_k$ satisfy the claimed properties. 
\end{proofof}

\section{Discussion and Future Directions}

We showed that in the presence of budget constraints a Max-NSW allocation may not be EF1 but achieves a constant approximation of EF1.
The tight approximation ratio is $1/4$ and in the case of large budgets, the ratio improves with the budget-cost ratio and converges to $1/2$ when this ratio goes to infinity.
Our results complement previous work that focused on the setting without budget constraint.

There are several directions for future work.
First, our results actually imply that in a budget-feasible setting, a $1/4$-EF1 allocation always exists and is compatible with Pareto optimality, but it is not known whether an exact EF1 allocation always exists or not and this appears to be a non-trivial problem.
Second, the computation of maximum NSW in our setting is an interesting question which we did not study in this paper. 
Indeed, an APX-hardness for the complexity of this problem can be readily established since the setting without budget can be seen as a special case of our setting (where every agent has budget $c(M)$).
Thus, an natural task is to design efficient approximation algorithms.  
Theorem~\ref{thm:Nash-sw-EF1} can be extended to show that an $\alpha$-approximation of the Max-NSW is also $\frac{\alpha}{4}$-EF1.
Hence an approximation algorithm for Max-NSW would also compute an approximate EF1 allocation.   
However, in the large budget case, an $\alpha$-approximation of the Max-NSW is not necessarily $\alpha\cdot \left( \frac{1}{2}-\frac{5}{\kappa^{1/4}} \right)$-EF1 due to the following example.
\begin{example}
	Fix an arbitrarily large integer $\kappa$.
	Suppose there are two agents with capacity $B_1 = B_2 = \kappa$, and $2\kappa$ items of size $1$. Let the first $\kappa$ items $M_1 = \{1,2,\ldots,\kappa\}$ have value $1$ and the remaining $\kappa$ items $M_2 = \{\kappa+1,\kappa+2,\ldots,2\kappa\}$ have value $0.2$.
	Obviously the optimal Nash social welfare is $0.36\kappa^2$.
	Now consider the allocation $\mathbf{X} = (M_1,M_2)$.
	The Nash social welfare is $0.2\kappa^2$, which implies an approximation ratio $\alpha > 0.5$.
	However, the allocation is not $c$-EF1 for any $c > \frac{1}{5}+\frac{1}{5(\kappa-1)}$.
\end{example}

Finally, while we focus on EF in this paper, it would also be interesting to consider other fairness notions such as proportionality and maximin share fairness and study whether a Max-NSW allocation provides any fairness guarantee under these notions.

\section*{Acknowledgement}

The authors thank Edith Elkind, Georgios Birmpas, Warut Suksompong, 
and Alexandros Voudouris for helpful discussions at the early stage of this work.

\bibliographystyle{abbrv}% the recommended bibstyle
\bibliography{nashsw}

\begin{thebibliography}{10}

\bibitem{conf/ijcai/AmanatidisBFHV20}
G.~Amanatidis, G.~Birmpas, A.~Filos{-}Ratsikas, A.~Hollender, and A.~A.
  Voudouris.
\newblock Maximum nash welfare and other stories about {EFX}.
\newblock In {\em {IJCAI}}, pages 24--30. ijcai.org, 2020.

\bibitem{barman2018finding}
S.~Barman, S.~K. Krishnamurthy, and R.~Vaish.
\newblock Finding fair and efficient allocations.
\newblock In {\em {EC}}, pages 557--574. {ACM}, 2018.

\bibitem{biswas2018fair}
A.~Biswas and S.~Barman.
\newblock Fair division under cardinality constraints.
\newblock In {\em {IJCAI}}, pages 91--97. ijcai.org, 2018.

\bibitem{conf/sigecom/BranzeiGM17}
S.~Br{\^{a}}nzei, V.~Gkatzelis, and R.~Mehta.
\newblock Nash social welfare approximation for strategic agents.
\newblock In {\em {EC}}, pages 611--628. {ACM}, 2017.

\bibitem{caragiannis2019envy}
I.~Caragiannis, N.~Gravin, and X.~Huang.
\newblock Envy-freeness up to any item with high nash welfare: The virtue of
  donating items.
\newblock In {\em {EC}}, pages 527--545. {ACM}, 2019.

\bibitem{caragiannis2016unreasonable}
I.~Caragiannis, D.~Kurokawa, H.~Moulin, A.~D. Procaccia, N.~Shah, and J.~Wang.
\newblock The unreasonable fairness of maximum nash welfare.
\newblock In {\em {EC}}, pages 305--322. {ACM}, 2016.

\bibitem{caragiannis2019unreasonable}
I.~Caragiannis, D.~Kurokawa, H.~Moulin, A.~D. Procaccia, N.~Shah, and J.~Wang.
\newblock The unreasonable fairness of maximum nash welfare.
\newblock {\em {ACM} Trans. Economics and Comput.}, 7(3):12:1--12:32, 2019.

\bibitem{chaudhury2020little}
B.~R. Chaudhury, T.~Kavitha, K.~Mehlhorn, and A.~Sgouritsa.
\newblock A little charity guarantees almost envy-freeness.
\newblock In {\em {SODA}}, pages 2658--2672. {SIAM}, 2020.

\bibitem{cole2017convex}
R.~Cole, N.~R. Devanur, V.~Gkatzelis, K.~Jain, T.~Mai, V.~V. Vazirani, and
  S.~Yazdanbod.
\newblock Convex program duality, fisher markets, and nash social welfare.
\newblock In {\em {EC}}, pages 459--460. {ACM}, 2017.

\bibitem{cole2015approximating}
R.~Cole and V.~Gkatzelis.
\newblock Approximating the nash social welfare with indivisible items.
\newblock In {\em {STOC}}, pages 371--380. {ACM}, 2015.

\bibitem{conitzer2017fair}
V.~Conitzer, R.~Freeman, and N.~Shah.
\newblock Fair public decision making.
\newblock In {\em {EC}}, pages 629--646. {ACM}, 2017.

\bibitem{darmann2015maximizing}
A.~Darmann and J.~Schauer.
\newblock Maximizing nash product social welfare in allocating indivisible
  goods.
\newblock {\em Eur. J. Oper. Res.}, 247(2):548--559, 2015.

\bibitem{ec/DevanurJSW11}
N.~R. Devanur, K.~Jain, B.~Sivan, and C.~A. Wilkens.
\newblock Near optimal online algorithms and fast approximation algorithms for
  resource allocation problems.
\newblock In {\em {EC}}, pages 29--38. {ACM}, 2011.

\bibitem{fain2018fair}
B.~Fain, K.~Munagala, and N.~Shah.
\newblock Fair allocation of indivisible public goods.
\newblock In {\em {EC}}, pages 575--592. {ACM}, 2018.

\bibitem{esa/FeldmanHKMS10}
J.~Feldman, M.~Henzinger, N.~Korula, V.~S. Mirrokni, and C.~Stein.
\newblock Online stochastic packing applied to display ad allocation.
\newblock In {\em {ESA} {(1)}}, volume 6346 of {\em Lecture Notes in Computer
  Science}, pages 182--194. Springer, 2010.

\bibitem{kaneko1979nash}
M.~Kaneko and K.~Nakamura.
\newblock The nash social welfare function.
\newblock {\em Econometrica: Journal of the Econometric Society}, pages
  423--435, 1979.

\bibitem{stoc/KesselheimTRV14}
T.~Kesselheim, K.~Radke, A.~T{\"{o}}nnis, and B.~V{\"{o}}cking.
\newblock Primal beats dual on online packing lps in the random-order model.
\newblock In {\em {STOC}}, pages 303--312. {ACM}, 2014.

\bibitem{lee2017apx}
E.~Lee.
\newblock Apx-hardness of maximizing nash social welfare with indivisible
  items.
\newblock {\em Inf. Process. Lett.}, 122:17--20, 2017.

\bibitem{garg2019improving}
P.~McGlaughlin and J.~Garg.
\newblock Improving nash social welfare approximations.
\newblock {\em J. Artif. Intell. Res.}, 68:225--245, 2020.

\bibitem{icalp/MolinaroR12}
M.~Molinaro and R.~Ravi.
\newblock Geometry of online packing linear programs.
\newblock In {\em {ICALP} {(1)}}, volume 7391 of {\em Lecture Notes in Computer
  Science}, pages 701--713. Springer, 2012.

\bibitem{nash1950bargaining}
J.~F. Nash~Jr.
\newblock The bargaining problem.
\newblock {\em Econometrica: Journal of the econometric society}, pages
  155--162, 1950.

\bibitem{nguyen2014computational}
N.~Nguyen, T.~T. Nguyen, M.~Roos, and J.~Rothe.
\newblock Computational complexity and approximability of social welfare
  optimization in multiagent resource allocation.
\newblock {\em Auton. Agents Multi Agent Syst.}, 28(2):256--289, 2014.

\bibitem{nguyen2014minimizing}
T.~T. Nguyen and J.~Rothe.
\newblock Minimizing envy and maximizing average nash social welfare in the
  allocation of indivisible goods.
\newblock {\em Discret. Appl. Math.}, 179:54--68, 2014.

\bibitem{ramezani2009nash}
S.~Ramezani and U.~Endriss.
\newblock Nash social welfare in multiagent resource allocation.
\newblock In {\em {AMEC/TADA}}, volume~59 of {\em Lecture Notes in Business
  Information Processing}, pages 117--131. Springer, 2009.

\end{thebibliography}

\newpage

\appendix

\section{Sub-additive Evaluation Functions}\label{appendix:sub-additive}

%\paragraph{Sub-additive Evaluation Functions.}
While we assume that the valuation functions are additive throughout this paper, 
Theorem \ref{thm:Nash-sw-EF1} can be extended to sub-additive valuation functions as well.
A valuation function $v : 2^M \rightarrow \mathbb{R}^+\cup\{0\}$ is sub-additive if for any disjoint sets $S,T\subseteq M$, we have
\begin{equation*}
v(S) + v(T) \geq v(S\cup T).
\end{equation*}

\begin{theorem}[Sub-additive Valuations]\label{thm:Nash-sw-EF1-sub-additive}
	When all agents have sub-additive valuation functions, a Max-NSW allocation $\mathbf{X^*}$ is $\frac{1}{4}$-EF1 and PO.
\end{theorem}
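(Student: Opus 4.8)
The plan is to mimic the proof of Theorem~\ref{thm:Nash-sw-EF1} almost verbatim, isolating exactly where additivity was used and checking that sub-additivity suffices. First I would note that Pareto optimality is immediate from the maximality of the NSW, exactly as in the additive case, so no property of the valuation functions is needed there. For the EF1 part, I would again argue by contradiction: suppose $\mathbf{X^*}$ is not $\frac14$-EF1, so there are agents $1,2$ and a set $T\subseteq X_2$ with $c(T)\le B_1$ and $v_1(T-j)>4\,v_1(X_1)$ for every $j\in T$. As before, relabel $T=\{1,\dots,t\}$ with $v_{2j}\le v_{2,j+1}$, set $T_1$ to be the odd-indexed items of $T-t$ and $T_2$ the even-indexed ones, so that $\{T_1,T_2,\{t\}\}$ partitions $T$ and, placing $t$ with the smaller-value side, the resulting two parts are value-ordered under $v_2$.

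The key inequalities to re-examine are the two value bounds. For agent $2$: the items lost are $T_\ell\subseteq T$, and $v_2(X_2)=v_2\big((X_2\setminus T_\ell)\cup T_\ell\big)\le v_2(X_2\setminus T_\ell)+v_2(T_\ell)$ by sub-additivity, so $v_2(X'_2)=v_2(X_2\setminus T_\ell)\ge v_2(X_2)-v_2(T_\ell)$; combined with $v_2(T_\ell)\le \tfrac12 v_2(T)\le\tfrac12 v_2(X_2)$ (again using sub-additivity on $T\subseteq X_2$ to get $v_2(T)\le v_2(X_2)$, and the value-ordering of the two parts after placing $t$), we retain $v_2(X'_2)\ge\tfrac12 v_2(X_2)$. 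For agent $1$: we need $v_1(X'_1)=v_1(T_\ell)>2\,v_1(X_1)$ where $\ell\in\arg\max_{\ell}v_1(T_\ell)$. Here I must be careful: in the additive case $v_1(T_1)+v_1(T_2)\ge v_1(T_1\cup T_2)=v_1(T-t)> 4v_1(X_1)$ gave $\max\{v_1(T_1),v_1(T_2)\}>2v_1(X_1)$. With sub-additivity we only get $v_1(T_1)+v_1(T_2)\ge v_1(T_1\cup T_2)$, so I would instead apply the hypothesis with $j=t$, namely $v_1(T-t)=v_1(T_1\cup T_2)>4v_1(X_1)$, and then use sub-additivity $v_1(T_1)+v_1(T_2)\ge v_1(T_1\cup T_2)>4v_1(X_1)$ to conclude $\max\{v_1(T_1),v_1(T_2)\}>2v_1(X_1)$ — this direction is fine since we are lower-bounding a sum of the two parts by the value of the union. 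Feasibility is unchanged since costs are still additive.

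Putting these together, $v_1(X'_1)v_2(X'_2)>2v_1(X_1)\cdot\tfrac12 v_2(X_2)=v_1(X_1)v_2(X_2)$ while $v_i(X'_i)=v_i(X_i)$ for $i\ge3$, contradicting maximality of the NSW. I expect the only real subtlety — the ``main obstacle'' — to be the agent-$1$ bound: one must invoke the EF1-violation hypothesis specifically with the removed item being $t$ (so that the leftover $T_1\cup T_2$ has large value) and then feed that through sub-additivity in the favorable direction $v_1(T_1)+v_1(T_2)\ge v_1(T_1\cup T_2)$; using it for a generic $j\in T$ or in the reverse direction would fail. Every other step uses only sub-additivity of $v_2$ on nested/disjoint sets and the additivity of the cost function, both of which hold. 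Hence the $\tfrac14$-EF1 and PO guarantees carry over, proving Theorem~\ref{thm:Nash-sw-EF1-sub-additive}.
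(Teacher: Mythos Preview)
Your agent-$1$ bound is fine, but the agent-$2$ bound has a real gap. The odd/even split of $T-t$ (after sorting by the singleton values $v_{2j}$) relies on \emph{additivity} of $v_2$ to conclude that $v_2(T_i)\le v_2(T_{3-i}+t)$ for both $i\in\{1,2\}$, which is what yields $v_2(T_\ell)\le\tfrac12 v_2(T)$ regardless of which $\ell$ agent~$1$ picks. For a merely sub-additive $v_2$ the pairing argument behind this fails. Concretely, take $T=\{1,\dots,5\}$ with all singleton values equal to $1$ (so any ordering is consistent) and
\[
v_2(S)=|S\cap\{1,3\}|+\min\{|S\cap\{2,4,5\}|,\,1\},
\]
a sum of an additive and a unit-demand valuation, hence monotone and sub-additive. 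Then $t=5$, $T_1=\{1,3\}$, $T_2=\{2,4\}$, and $v_2(T_1)=2$, $v_2(T_2)=1$, $v_2(T_2+t)=1$, $v_2(T)=3$. Taking $X_2=T$, if agent~$1$ prefers $T_1$ you get $v_2(X_2\setminus T_1)=v_2(\{2,4,5\})=1<\tfrac12 v_2(X_2)=\tfrac32$, so the inequality $v_2(X'_2)\ge\tfrac12 v_2(X_2)$ you need does not hold. (A minor aside: your step ``$v_2(T)\le v_2(X_2)$'' uses monotonicity, not sub-additivity.)

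The paper repairs exactly this point by replacing the odd/even split with a partition that works for any monotone set function: initialize $(T_1,T_2)=(T,\emptyset)$ and, while some $e\in T_1$ satisfies $v_2(T_1-e)>v_2(T_2+e)$, move $e$ from $T_1$ to $T_2$. At termination $v_2(T_1)>v_2(T_2)$ while $v_2(T_1-e)\le v_2(T_2+e)$ for every $e\in T_1$; picking any $t\in T_1$ and redefining $T_1\leftarrow T_1-t$ gives $v_2(T_i)\le v_2(T_{3-i}+t)=v_2(T\setminus T_i)$ for both $i$. Combined with monotonicity ($v_2(T\setminus T_\ell)\le v_2(X_2\setminus T_\ell)$) and sub-additivity ($v_2(X_2)\le v_2(X_2\setminus T_\ell)+v_2(T_\ell)$), this yields $v_2(X_2\setminus T_\ell)\ge\tfrac12 v_2(X_2)$ for whichever $\ell$ agent~$1$ selects. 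The rest of your argument (feasibility via additive costs, $v_1(T_1)+v_1(T_2)\ge v_1(T-t)>4v_1(X_1)$, and the NSW contradiction) then goes through unchanged.
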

\begin{proof}
	The proof is almost identical to that of Theorem~\ref{thm:Nash-sw-EF1}, and the only difference is here we provide a more general way to partition $T$ into $T_1$, $T_2$ and $\{t\}$ such that for some $i\in\{1,2\}$,
	\begin{align}\label{eq:sub-additive}
	    v_2(T_i) \le v_2(T_{3-i}) \le v_2(T_i+ t).
	\end{align}
    
    Let $T'_1$ and $T'_2$ be the two sets returned by Algorithm \ref{alg:sub-additive} on inputs $T$ and $v_2$. Then we have $v_2(T'_1) > v_2(T'_2)$.
    Moreover, for any $t \in T'_1$, it holds that $v_2(T'_1-t) \le v_2(T'_2+t)$.
    Let $t$ be an arbitrary item in $T'_1$; let $T_1 = T'_1 - t$, and $T'_2 = T_2$. 
    Then $T_1$, $T_2$ and $\{t\}$ satisfy \eqref{eq:sub-additive}.
\end{proof}

\begin{algorithm}[htbp]
	\caption{Partioning $T$.\label{alg:sub-additive}}
	{\bf Input:} A set of items $T$ and a sub-additive valuation $v$.
	
	Initialize: $T_1 = T$ and $T_2 = \emptyset$.\\
	\While{there exists $e \in T_1$ such that $v(T_1-e) > v(T_2+ e)$}
	{
		$T_1 = T_1-e$ and $T_2 = T_2+ e$. \\
	}
	{\bf Output:} A partition of $T = (T_1, T_2)$.
\end{algorithm}

\end{document}